\newtheorem{theorem}{Theorem}
\newtheorem{lemma}[theorem]{Lemma}
\newtheorem{proposition}[theorem]{Proposition}
\newtheorem{definition}[theorem]{Definition}
\newtheorem{corollary}[theorem]{Corollary}
\newtheorem{remark}[theorem]{Remark}
\newtheorem{example}[theorem]{Example}
\newcommand{\N}{{\mathbb N}}
\newcommand{\Z}{{\mathbb Z}}
\newcommand{\lcm}{\mathrm{lcm}}
\newcommand{\F}{\mathbb F}
\newcommand{\Le}{\mathbb L}
\newcommand{\K}{\mathbb K}
\newcommand{\R}{\mathcal R}
\newcommand{\tq}{\;\mid\;}
\begin{document}

\title{Cyclic and BCH Codes whose Minimum Distance Equals their Maximum BCH bound
\thanks{This work was partially supported by MINECO (Ministerio de Econom\'{\i}a
y Competitividad), (Fondo Europeo de Desarrollo Regional)
project MTM2012-35240, Programa Hispano Brasile\~{n}o de Cooperaci\'{o}n Universitaria PHB2012-0135, and Fundaci\'{o}n S\'{e}neca of Murcia. The second author has been supported by Departamento Administrativo de Ciencia, Tecnolog\'{\i}a e Innovaci\'on de la Rep\'ublica de Colombia}}

\author{\IEEEauthorblockN{Jos\'e Joaqu\'{i}n Bernal\IEEEauthorrefmark{1},
Diana H. Bueno-Carre\~no\IEEEauthorrefmark{2} and
Juan Jacobo Sim\'on\IEEEauthorrefmark{1}. \\
\IEEEauthorblockA{\IEEEauthorrefmark{1}Departamento de Matem\'{a}ticas\\
Universidad de Murcia,
30100 Murcia, Spain.\\ Email: \{josejoaquin.bernal, jsimon\}@um.es} \\
\IEEEauthorblockA{\IEEEauthorrefmark{2}Departamento de Ciencias Naturales y Matem\'{a}ticas\\
Pontificia Universidad Javeriana, 
 Cali, Colombia\\
 Email: dhbueno@javerianacali.edu.co}
}}

\maketitle

%
%
  %
%
%

\begin{abstract}
In this paper we study the family of cyclic codes such that its minimum distance reaches the maximum of its BCH bounds. We also show a way to construct cyclic codes with that property by means of computations of some divisors of a polynomial of the form $x^n-1$. We apply our results to the study of those BCH codes $C$, with designed distance $\delta$, that have minimum distance $d(C)=\delta$. Finally, we present some examples of new binary BCH codes satisfying that condition. To do this, we make use of two related tools: the discrete Fourier transform and the notion of apparent distance of a code, originally defined for multivariate abelian codes.
\end{abstract}


\section{Introduction}
The computation of the minimum distance of a cyclic code, or a lower bound for it, is one of the main problems on abelian codes (see, for example, \cite{Ch,S,VLW}). The oldest lower bound for the minimum distance of a cyclic code is the BCH bound \cite[p. 151]{HP}. The study of this bound and its generalizations is a classical topic which includes the study of the very well-known family of BCH codes. In particular, an interesting problem is to determine when the maximum of the BCH bounds of a given cyclic code equals its minimum distance (see \cite{Charpin,S}). This is our interest.

In  this paper we deal with three problems related to the study of the BCH bound. The first one is how to give necessary and sufficient conditions for a cyclic code to insure that the maximum of its  BCH bounds equals its minimum distance. The second problem is how to construct such cyclic codes. Our third problem is related to construction techniques of BCH codes for which its designed distance, its maximum BCH bound and its minimum distance coincide. 

To solve our first problem, we make use of two related tools: the discrete Fourier transform and the notion of apparent distance of a code, originally defined for multivariate abelian codes in \cite{C}. These tools and the notation needed are given in Section 2. In Section 3, we characterize those cyclic codes for which its minimum distance reaches the maximum of its BCH bounds (problem 1). Then we study how to construct cyclic codes with that property by means of computations of divisors of a polynomial of the form
  $x^n-1$ (problem 2). Section 4 is devoted to solve our third problem. We apply our results to the study of those BCH codes $C$, with designed distance $\delta$, that have minimum distance $d(C)=\delta$ (see \cite[Section 9.2]{S}). In this paper, some examples of construction techniques and examples of new binary BCH codes whose minimum distance equals its designed distance are presented. We point out that all computations were done by using the GAP4r7 program \cite{GAP} with the cooperation of Alexander Konovalov.   The authors are indebted to him.

\section{Notation and preliminaries}\label{preliminares}

We will use standard terminology from coding theory (see for example \cite[Chapter 7]{S} or \cite[Section 2]{Charpin}). We denote by $q$ a power of the prime number $p$ and by $\F_q$ the field of $q$ elements. Let $n$ be a positive integer which is coprime to $q$. We denote by $R_n$ the set of $n$-th roots of unity and by $U_n$ the set of \textit{primitive} $n$-th  roots of unity. 

We denote by $\F_q[x]$ the ring of polynomials with coefficients in $\F_q$. For any $g=g(x)\in \F_q[x]$ we denote by $\deg(g)$ its degree, by $supp(g)$ its support and by $\omega(g)=\left|supp(g)\right|$ its weight. For any positive integer $n$, we consider the quotient ring $\F_q[x]/(x^n-1)$ which will be denoted by $\F_q(n)$. As usual, we identify the elements $g\in \F_q(n)$ with polynomials; so we may take $g\in \F_q(n)$ and then write $g\in \F_q[x]$ (where $\deg(g)<n$). For any $f\in \F_q[x]$ we denote by $\overline{f}$ its image under the canonical projection onto $\F_q(n)$. 

As in \cite{VLW}, a cyclic code $C$ of length $n$ in the alphabet $\F_q$ will be identified with the corresponding ideal in $\F_q(n)$ (up to permutation equivalence). Then, by a cyclic code we mean an ideal of $\F_q(n)$. It is well known that if $\gcd(n,q)=1$ then the quotient ring $\F_q(n)$ is semisimple and then every cyclic code has a unique monic generator polynomial \cite[Theorem 7.1]{S} and a unique idempotent generator \cite[Theorem 8.1]{S}. We always assume that $\gcd(n,q)=1$.

We denote by $\Z_n$ the integers modulo $n$ and we identify any class in $\Z_n$ with its canonical representative. It is well-known that every cyclic code $C$ in  $\F_q(n)$ is totally determined by its set of zeros (or its root set), which is defined as $Z(C)=\left\{\alpha \in R_n \tq c(\alpha)=0,\;\text{ for all }\; c\in C\right\}$; thus, for any polynomial $f\in \F_q(n)$, we have that $f\in C$ if and only if $f(\alpha)=0$ for all $\alpha\in Z(C)$. Fixed $\alpha \in U_n$, we denote the defining set of $C$ with respect to $\alpha$ as $D_\alpha (C)=\left\{i\in \Z_n\tq \alpha^i\in Z(C)\right\}$ (see \cite[p. 199]{S}). It is well-known that, when $\gcd(n,q)=1$, defining sets are partitioned in $q$-cyclotomic cosets modulo $n$ \cite[p. 104]{S}, which are defined as follows: given any element $a\in \Z_n$, the $q$-cyclotomic coset of $a$ modulo $n$ is the set $C_q(a)=\{a, qa,\dots,q^{n_a-1}a\}(\mod n)$, where $n_a$ is the smallest positive integer such that $q^{n_a}a\equiv a\mod n$. We recall that the notions of set of zeros and defining set are also applied to polynomials in $\F_q(n)$ in the obvious way.

For any code $C$, we denote its minimum distance by $d(C)$. The BCH bound states that for any cyclic code in $\F_q(n)$ that has a string of $\delta-1$ consecutive powers of some $\alpha \in U_n$ as zeros, the minimum distance of the code is at least $\delta$ \cite[Theorem 7.8]{S}. In terms of defining sets, if there is a string of $\delta -1$ consecutive integers modulo $n$ in $D_\alpha(C)$, for some $\alpha \in U_n$, then $d(C)\geq \delta$. Note that different roots of unity may yield different defining sets and consequently different lower bounds. For any cyclic code $C$ the maximum of its BCH bounds will be denoted by $\Delta(C)$. Sometimes it is called \textit{the} BCH (lower) bound of the code (see \cite[p. 22]{C} and \cite[p. 984]{Charpin}). 


The following Remark shows that in order to compute the maximum $\Delta(C)$ we do not need to consider all the elements in $U_n$. This fact will be used later.

\begin{remark}\label{diferentescotas}
Let $C_q(a_1),\dots, C_q(a_h)$ be the $q$-cyclotomic cosets modulo $n$ and fix a complete set of representatives $\{a_1, \dots , a_h\}$. Suppose we have chosen $\alpha\in U_n$ to get a defining set $D_\alpha(C)$. We want to identify the elements $\beta \in U_n$ satisfying that $D_\beta(C)\neq D_\alpha(C)$. Then, $\beta$ must satisfy the equality $\beta^{a_iq^j}=\alpha$ for some representative $a_i$ with $\gcd(n,a_i)=1$ and $j\in \Z$. In this case $D_\beta(C)=a_i\cdot D_\alpha(C)$, where the multiplication has the obvious meaning. We define
\begin{equation}\label{A}
 A(n)=\{a_i \tq \gcd(a_i,n)=1\}.
\end{equation}
It is easy to see that $O_n(q)=|C_q(a_i)|$ for any $a_i\in A(n)$. In addition, since $D_{\beta^{a_i}}(C)=D_{\beta^{a_iq^j}}(C)=D_{\beta^{a_iq^{j'}}}(C)$ ($j,j'\in \Z$), we conclude that we have to consider at most $\frac{\phi(n)}{O_n(q)}$ distinct defining sets or elements in $U_n$ to get $\Delta(C)$. 

For example, set $n=41$ and $q=2$. The $2$-cyclotomic cosets are $C_2(0),\,C_2(1)$ and $C_2(3)$. So $A(41)=\{1,3\}$. Fixed $\alpha\in U_{41}$, let $C$ be the cyclic code with defining set $D_\alpha(C)= C_2(1)$. Some BCH bounds for $C$ with respect to $\alpha$ are $\delta_1=3$ by considering  $\{1,2\}\subset D_\alpha(C)$, and $\delta_2=4$ by considering $\{8,9,10\}\subset D_\alpha(C)$. Now we also have to consider $D_\beta(C)=3\cdot D_\alpha(C)= C_2(3)$ and compute the corresponding BCH bounds. We find $\delta_3=6$ by considering $\{11,12,13,14,15\}\subset D_\beta(C)$.  In this case, $\Delta(C)=6$. It is worth to mention that in the binary and ternary cases for $n\leq 70$ we have that $\frac{\phi(n)}{O_n(q)}\leq 6$ and for $n\leq 90$ we have that $\frac{\phi(n)}{O_n(q)}\leq 8$.
\end{remark}

A cyclic code $C$ in $\F_q(n)$, with generator polynomial $g(x)$, is a BCH code of designed distance $\delta$ if there exists $\alpha \in U_n$ and $b\in \{0,\dots,n-1\}$ such that $g(x)$ is the polynomial with the lowest degree over $\F_q$ such that $\left\{\alpha^{b+j}\tq j=0,\dots,\delta-2\right\}\subseteq Z(C)$ (see \cite[p. 202]{S}). Equivalently, $C$ is a BCH code if for any cyclotomic coset $Q\subseteq D_\alpha(C)$ we have that $Q\cap \left\{b+j\tq j=0,\dots,\delta-2\right\}\neq \emptyset$. As it is known, this implies that $C$ is the cyclic code with highest dimension such that its set of zeros satisfies the inclusion mentioned above. We denote such a code by $B_q(\alpha,\delta,b)$.  The Bose distance of a BCH code $C=B_q(\alpha,\delta,b)$ is defined as the largest $\delta'$ such that $C=B_q(\alpha',\delta',b')$, for some $b'\in \{0,\dots,n-1\}$ and some $\alpha'\in U_n$. We note that for a BCH code it may happen that its Bose distance is less than $\Delta(B_q(\alpha,\delta,b))$, as we shall see in the next example. 

Let $\Le|\F_q$ be an extension field. For any element $a\in \Le$ we denote by $\mathrm{min}_q(a)$ the minimal polynomial of $a$ in $\F_q[x]$. In the case $q=2$ we only write $\mathrm{min}(a)$.

\begin{example}\label{distancia de bose mayor que dist BCH}\rm{
Set $q=2$, $n=21$ and fix $\alpha\in U_{21}$ such that $\mathrm{min}(\alpha)=x^6+x^5+x^4+x^2+1$. Let $C=B_2(\alpha,4,6)$ be the BCH code generated by $\lcm\{\mathrm{min}(\alpha),\mathrm{min}(\alpha^{3}), \mathrm{min}(\alpha^{7})\}$. Consider the 2-cyclotomic cosets modulo $21$, $C_2(0)=\{0\}$, $C_{2}(1)= \{1,2,4,8,11,16\}$, $C_2(3)= \{3,6,12\}$, $C_2(5)=\{5,10,13,17,19,20\}$, $C_2(7)=\{7,14\}$ and $C_2(9)=\{9,15,18\}$.  One may check that the defining set of the code $C$ with respect to $\alpha$ is $D_\alpha(C)= C_2(1)\cup C_2(3)\cup C_2(7)= C_2(6)\cup C_2(7)\cup C_2(8)$. In this case $A(21)=\{1,5\}$ so we also have to consider the element $\beta\in U_{21}$ such that $\beta^5=\alpha$. Then  $D_\beta(C)=5 \cdot D_\alpha(C)=C_2(5)\cup C_2(7)\cup C_2(9)$. One may see that the Bose distance is $\delta=4$, given by considering $\{6,7,8\}\subset D_\alpha(C)$ and $\{13,14,15\}\subset D_\beta(C)$. However $\Delta(C)=5$, because $\{1,2,3,4\}\subset D_\alpha(C)$ and $\{17,18,19,20\}\subset D_\beta(C)$. But $\{1,2,3,4\}\subset C_2(1)\cup C_2(3)$ and $\{17,18,19,20\}\subset C_2(5)\cup C_2(9)$, so that $C$ cannot be
  a BCH code of designed distance $\delta=5$. Hence the Bose distance is less than the maximum of all possible BCH bounds (or simply the BCH bound, $\Delta(C)$).
}\end{example}

Let $\Le|\F_q$ be an extension field such that $U_n \subseteq \Le$ and fix $\alpha\in U_n$. The \textit{ (discrete) Fourier transform} of a polynomial $f\in \F_q(n)$ with respect to $\alpha$ (also called Mattson-Solomon polynomial), that we denote by $\varphi_{\alpha,f}$ is defined as $$\varphi_{\alpha,f}(x)=\sum_{j=0}^{n-1} f(\alpha^j)x^j.$$
Clearly, $\varphi_{\alpha,f}\in \Le(n)$; moreover, the function Fourier transform may be viewed as an isomorphism of algebras $\varphi_{\alpha}:\Le(n)\longrightarrow (\Le^{n},\star)$, where the multiplication ``$\star$'' in $\Le^n$ is defined coordinatewise (see \cite[Section 2.2]{C} or \cite[\S~8.6]{S}). Then we may see $\varphi_{\alpha,f}$ as a vector in $\Le^n$ or as a polynomial in $\Le(n)$. The  inverse of the Fourier transform is given by $\varphi^{-1}_{\alpha,g}(x)=\frac{1}{n}\sum_{i=0}^{n-1}g(\alpha^{-i})x^i$, where $g\in \Le(n)$ (see for example \cite{C,Charpin,S}). For any  $i \in \{0,\dots,n-1\}$ we denote $\varphi_{\alpha,f}[i]=f(\alpha^i)$, the coefficient (or coordinate) corresponding to $x^i$.

\begin{remark}\label{basicos de la DFT}
 For any  $\alpha\in U_n$, $f\in \F_q(n)$ and $g\in \Le(n)$ we have that:
\begin{enumerate}
\item $supp\left(\varphi_{\alpha,f}\right)=\left\{ i\in \{0,\dots,n-1\}\tq f\left(\alpha^i\right)\neq 0\right\}$ and hence $\Z_n\setminus supp\left(\varphi_{\alpha,f}\right)=D_\alpha(f)$, the defining set of $f$.
\item Since $f=\varphi^{-1}_{\alpha,\varphi_{\alpha,f}}(x)$ then $supp(f)=\left\{ i\in \{0,\dots,n-1\}\tq \varphi_{\alpha,f}\left(\alpha^{-i}\right)\neq 0\right\}$, so that $|supp(f)|=n-|Z\left (\varphi_{\alpha,f}\right)|$.
\item $\varphi^{-1}_{\alpha,g}\in \F_q(n)$ if and only if  $\left(g\left(\alpha^j\right)\right)^q=g\left(\alpha^j\right)$ for any $j\in \{0,\dots,n-1\}$. 
\item $\varphi^{-1}_{\alpha,g}\in \F_q(n)$ if and only if $\varphi^{-1}_{\beta,g}\in \F_q(n)$ for all $\beta\in U_n$.
\end{enumerate}
The first two assertions come directly from the definition of the discrete Fourier transform together with the fact that it is an isomorphism. The third one comes directly from the well-known property that an element $a\in \Le$ satisfies that $a\in \F_q$ if and only if $a^q=a$. Finally to see the last assertion observe that if we take another primitive root of unity $\beta\neq\alpha$ the coefficients of $\varphi^{-1}_{\beta,g}$ are obtained by permuting those of $\varphi^{-1}_{\alpha,g}$.
\end{remark}

The following lemma, related with the discrete Fourier transform, will play an important role later.
\begin{lemma}\label{TFDenFq}
Let $g\in \Le(n)$. If $\varphi^{-1}_{\alpha,g}\in \F_q(n)$ for any $\alpha\in U_n$ then $supp(g)$ is a union of cyclotomic cosets. If $g$ is an idempotent in $(\Le^n,\star)$ the converse holds; that is, if $supp(g)$ is union of $q$-cyclotomic cosets then $\varphi^{-1}_{\alpha,g}\in \F_q(n)$.
\end{lemma}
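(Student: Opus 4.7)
The plan is to use Remark~\ref{basicos de la DFT}(3) to reformulate both statements as a condition on the values of $g$ at $n$-th roots of unity. Fixing any $\alpha\in U_n$, as $j$ runs over $\{0,\dots,n-1\}$ the powers $\alpha^{-j}$ sweep all of $R_n$, so Remark~\ref{basicos de la DFT}(3) says that $\varphi^{-1}_{\alpha,g}\in\F_q(n)$ is equivalent to $g(\beta)^q=g(\beta)$ for every $\beta\in R_n$. This also confirms the independence of $\alpha$ noted in Remark~\ref{basicos de la DFT}(4), and reduces the lemma to linking this pointwise Frobenius-stability of $g$ with the cyclotomic-coset structure of $supp(g)$.

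For the forward direction, write $g=\sum_{i=0}^{n-1}g_i x^i$ with $g_i\in\Le$. In characteristic $p$, Frobenius gives $g(\beta)^q=\sum_i g_i^q\beta^{qi}$, so the polynomial $\tilde g(x)=\sum_i g_i^q\, x^{qi\bmod n}\in\Le(n)$ satisfies $\tilde g(\beta)=g(\beta)^q=g(\beta)$ for every $\beta\in R_n$. Since the discrete Fourier transform is an isomorphism, evaluation on $R_n$ is injective on $\Le(n)$, hence $\tilde g=g$ in $\Le(n)$. Because $\gcd(q,n)=1$, the map $i\mapsto qi\bmod n$ is a bijection of $\Z_n$; matching the coefficient of $x^{qi}$ on both sides then yields $g_{qi}=g_i^q$ for every $i\in\Z_n$. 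As $a^q=0$ iff $a=0$ in a field, this forces $i\in supp(g)$ to be equivalent to $qi\bmod n\in supp(g)$, so $supp(g)$ is closed under multiplication by $q$, i.e.\ a union of $q$-cyclotomic cosets modulo $n$.

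For the converse, idempotency of $g$ in $(\Le^n,\star)$ is coordinatewise, which forces $g_i^2=g_i$ and hence $g_i\in\{0,1\}\subseteq\F_q$; in particular $g_i^q=g_i$. Then $g(\beta)^q=\sum_i g_i\beta^{qi}=\sum_{i\in supp(g)}\beta^{qi}$, and since $supp(g)$ is a union of $q$-cyclotomic cosets, the bijection $i\mapsto qi\bmod n$ permutes $supp(g)$, giving $\sum_{i\in supp(g)}\beta^{qi}=\sum_{j\in supp(g)}\beta^j=g(\beta)$. Applying Remark~\ref{basicos de la DFT}(3) once more yields $\varphi^{-1}_{\alpha,g}\in\F_q(n)$, as required.

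The one step requiring genuine care is the coefficient comparison $\tilde g=g\Rightarrow g_{qi}=g_i^q$: one has to reduce $qi$ modulo $n$ and invoke $\gcd(q,n)=1$ to see that the substitution $j=qi\bmod n$ is a bijection of $\Z_n$, so that the reshuffled exponents in $\tilde g$ do not collide and coefficients can be read off unambiguously. Everything else amounts to a straightforward translation between the Frobenius action on $\Le$ and the discrete Fourier transform, together with the observation that the converse is \emph{false} without the idempotency hypothesis (without it, $g_i^q=g_i$ need not hold, and one only recovers the support condition, not $g(\beta)^q=g(\beta)$).
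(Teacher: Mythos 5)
Your proof is correct, and it diverges from the paper's in a way worth noting. For the forward direction both arguments ultimately rest on the same fact, namely the Frobenius-equivariance of the Fourier transform: the paper sets $f=\varphi^{-1}_{\alpha,g}\in\F_q(n)$, uses $\varphi_{\alpha,f}^q=\varphi_{\alpha^q,f}$ to see that $D_\alpha(f)$ is a union of $q$-cyclotomic cosets, and takes complements via $supp(g)=\Z_n\setminus D_\alpha(f)$; you instead work on the coefficient side, comparing $g$ with its Frobenius twist $\tilde g$ and using injectivity of evaluation on $R_n$ to extract the relation $g_{qi}=g_i^q$. Your version is mechanically a little longer but yields the sharper conclusion $g_{qi}=g_i^q$ (not merely the support statement), which is exactly the information needed to see why idempotency cannot be dropped in the converse. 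For the converse the two routes are genuinely different: the paper invokes the existence of the idempotent generator $e\in\F_q(n)$ of the cyclic code with defining set $\Z_n\setminus supp(g)$ and identifies $g=\varphi_{\alpha,e}$ by matching supports of two $0$--$1$ vectors, whereas you verify $g(\beta)^q=g(\beta)$ by a direct computation using $g_i\in\{0,1\}$ and the fact that multiplication by $q$ permutes $supp(g)$, then apply Remark~\ref{basicos de la DFT}(3). Your converse is more elementary and self-contained (no appeal to the theory of idempotent generators), at the cost of the paper's cleaner structural identification of $g$ as the transform of a code idempotent. All the delicate points --- the bijectivity of $i\mapsto qi\bmod n$ from $\gcd(q,n)=1$, and the unambiguous coefficient comparison in $\tilde g=g$ --- are handled correctly.
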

\begin{proof}
First, suppose that $\varphi^{-1}_{\alpha,g}\in \F_q(n)$. Observe that for any $f(x)\in \F_q(n)$, $\varphi_{\beta,f}^q (x)=\sum_{j=0}^{n-1}\left(f\left(\beta^j\right)\right)^q x^j=\sum_{j=0}^{n-1}\left(f\left(\beta^q\right)\right)^j x^j=\varphi_{\beta^q,f}(x)$. So the defining set of $\varphi^{-1}_{\alpha,g}$ is a union of cyclotomic cosets. Since $supp(g)=\Z_n\setminus D_\alpha(\varphi^{-1}_{\alpha,g})$ we are done.  

We first note that any idempotent in $(\Le^n,\star)$ verifies that its coordinates (or coefficients) are only 1 or 0. Now, suppose that $g\in(\Le^n,\star)$ is an idempotent and $supp(g)$ is a union of $q$-cyclotomic cosets. Then there exists an idempotent $e\in \F_q(n)$ such that $D_\alpha(e)=\Z_n\setminus supp(g)$; in fact, $e$ is the idempotent generator of the code over $\F_q$ with defining set $\Z_n\setminus supp(g)$ with respect to $\alpha$. Since $e$ is an idempotent in $\F_q(n)$ we have that $\varphi_{\alpha,e}$ is an idempotent in $(\Le^n,\star)$ and also $supp(\varphi_{\alpha,e})=supp(g)$. Then $\varphi_{\alpha,e}=g$ and hence $\varphi^{-1}_{\alpha,g}\in \F_q(n)$. \end{proof}

Let us recall some definitions in \cite[Chapter 3]{C} related to the computation of the BCH bound. The context of these definitions is the study of multivariate polynomials.  We only need the univariate polynomials version. 

\begin{definition}\label{aparentepol}
 Let $\Le$ be a field. For any element $g\in \Le(n)$ we define the \textit{apparent distance} of $g$, that we denote by $d^\ast(g)$, as follows 
\begin{enumerate}
 \item If $g=0$ then $d^\ast(0)=0$.
\item If $g\neq 0$ then
	\[d^\ast (g)=\max\left\{n-\deg\left(\overline{x^hg}\right)\tq 0\leq h\leq n-1\right\}.\]
\end{enumerate}
\end{definition}

It is easy to see that one may compute the apparent distance of a polynomial $0 \neq g\in \Le(n)$ as follows. Suppose that $g=\sum a_ix^i$. If we associate to the polynomial its coefficient vector $M(g)=\left(a_0,\dots, a_{n-1}\right)$ then the apparent distance $d^\ast(g)$ is the length of the biggest chain of consecutive zeros (modulo $n$) in $M(g)$ plus $1$.

\begin{example}\rm{
 Let $f=1+x+x^4\in \F_2(5)$. Compute $ \overline{x^0f}=1+x+x^4$,\; $\overline{xf}=1+x+x^2$;\;  $\overline{x^2f}=x+x^2+x^3$;\; $ \overline{x^3f}=x^2+x^3+x^4$;\; $\overline{x^4f}=1+x^3+x^4$. 
Then $d^\ast(f)=5-\deg(\overline{xf})=3$.

 If we take} $M(f)=(1\;1\;0\;0\;1)$ then $d^\ast(f)=2+1=3$.
\end{example}

Let $f\in \Le(n)$. It is clear that the polynomials $f$ and $\overline {x^h f}$ have the same set of zeros (or root set). Hence, $\deg\left(\overline{x^hf}\right) \geq | D_\alpha (f)|$, for any $\alpha\in U_n$, where $D_\alpha(f)$ denotes the defining set of $f$. Therefore $d^\ast(f)\leq n- | D_\alpha (f)|$ for any $\alpha\in U_n$. 

 Now, by the definition of the inverse Fourier transform (see Remark~\ref{basicos de la DFT}), we have that 
\begin{equation}\label{peso y dominio de def}
\omega(f)=n-| D_\alpha (\varphi_{\alpha,f})|.
\end{equation}

 Hence, 
\begin{equation}\label{dist aparente imagen menor que peso}
 d^\ast(\varphi_{\alpha,f})\leq n-| D_\alpha (\varphi_{\alpha,f})|= \omega(f),\;\;\text{ for all }\;\;f\in \F_q(n)\;\;\text{ and }\;\;\alpha\in U_n.
\end{equation}

This implies that the minimum of the apparent distances of the images of the nonzero codewords of a cyclic code is a lower bound for its minimum distance. Camion's definition of apparent distance of an abelian code comes from these ideas. In our case, we present that definition as follows.

\begin{definition}\label{distancia aparente}
  Let $C$ be a cyclic code in $\F_q(n)$ and consider $\alpha \in U_n$. The \textit{ apparent distance of $C$ with respect to $\alpha$} is $d^\ast_{\alpha}(C)=\min_{c \in C,\,c\neq 0}\{d^\ast(\varphi_{\alpha,c}) \}$ and \textit{the apparent distance of $C$} is $$d^\ast(C)=\max_{\alpha\in U_n} \{d^\ast_{\alpha}(C)\}.$$  We also define the set of \textit{ optimal roots} of $C$ as 
  \[\R(C)=\left\{\beta \in U_n \tq d_{\beta}^\ast(C)=d^\ast(C)\right\}.\]
\end{definition}

For the paragraph prior Definition \ref{distancia aparente} we have that $d^\ast(C)\leq d(C)$ for any cyclic code $C$.  In \cite[p. 22]{C} Camion shows that for any cyclic code $C$ the equality $d^*(C)=\Delta(C)$ holds.

Note that the value $d^\ast (\varphi_{\alpha,c})$ depends on the support of $\varphi_{\alpha,c}$; that is, it depends on the \textit{distribution} of the zeros of $c$ with respect to $\alpha$; so, the minimum $d^\ast_\alpha(C)$ depends on the \textit{distribution} of $D_\alpha(C)$. Hence, in order to compute the maximum $d^\ast(C)$ we need to look at the different defining sets of $C$, for each $\alpha\in U_n$. As we have seen in Remark \ref{diferentescotas}, if we fix $\alpha\in U_n$ and $\{a_1,\dots, a_h\}$, a complete set of representatives of the $q$-cyclotomic cosets modulo $n$, to consider the different defining sets of $C$ we only need to consider the roots $\beta\in U_n$ such that $\beta^{a_i}=\alpha$ for some $a_i$ coprime with $n$. Then, for any $\alpha\in U_n$ we define the set 
\begin{equation}\label{Ra}
 \mathcal{R}_\alpha=\{\beta \in U_n \tq \beta^{a}=\alpha, \, \, a \in A(n)\}.
\end{equation}
where $A(n)$ was defined in (\ref{A}).
 
Therefore, in practice, to compute the apparent distance of a cyclic code $C$ in $\F_q(n)$ it is enough to fix $\alpha \in U_n$ and compute $d^\ast(C)=\max \{d^\ast_{\beta}(C) \tq \beta \in \mathcal{R}_\alpha\}.$ 

 Let $e,g\in C$ be the idempotent generator and the generator polynomial of $C$, respectively. If $f,h\in \F_q(n)$ then $supp\left(\varphi_{\beta,fh}\right) \subseteq supp\left(\varphi_{\beta,f}\right)$ because $\varphi_{\beta,fh} = \varphi_{\beta,f}\star \varphi_{\beta,h}$, and then, for any $c\in C$, and any $\beta \in U_n$, we have that $supp\left(\varphi_{\beta,c}\right)\subseteq supp\left(\varphi_{\beta,g}\right)=supp\left(\varphi_{\beta,e}\right)$; so that, $d^\ast\left(\varphi_{\beta,g}\right)=d^\ast\left(\varphi_{\beta,e}\right)\leq d^\ast\left(\varphi_{\beta,c}\right)$. Hence, $d^\ast_{\beta}(C)=d^\ast\left(\varphi_{\beta,e}\right)$ and 
\begin{equation}\label{cotaBCH}
 \Delta(C)=d^\ast(C)=d^\ast\left(\varphi_{\beta,e}\right)=d^\ast\left(\varphi_{\beta,g}\right)\leq d(C),\quad \forall \beta\in \R(C).
\end{equation}
(see \cite[p. 22]{C}). 

\begin{example}\rm{
 Set $q=2$, $n=17$ and take $a_1=0, a_2=1, a_3=3$ as representatives of the $2$-cyclotomic cosets in $\Z_{17}$. Then $A(17)=\{1,3\}$. Let $C$ be the cyclic code with defining set $D_\alpha(C)= C_2(1)=\{1,2,4,8,9,13,15,16\}$ with respect to $\alpha \in U_{17}$, such that $\mathrm{min}(\alpha)=x^8+x^7+x^6+x^4+x^2+x+1$. The  reader may check that $e = x^{16}+x^{15}+x^{13}+x^9+x^8+x^4+x^2+x+1$ is the idempotent generator of $C$, and $M(\varphi_{\alpha,e})=( 1\, 0\, 0\, 1\, 0\, 1\, 1\, 1\, 0\, 0\, 1\, 1\, 1\, 0\, 1\, 0\, 0 )$. Then $d^\ast\left(\varphi_{\alpha,e}\right)=3$. Taking $\beta$ such that $\beta^3=\alpha$, one may check that $d^\ast\left(\varphi_{\beta,e}\right)=4$. Hence $\Delta(C)=d^\ast(C)=d^\ast\left(\varphi_{\beta,e}\right)=4$.
}\end{example}

As an immediate consequence of (\ref{cotaBCH}) we have the following corollary.

\begin{corollary}\label{corolario condicion suficiente dC=Delta C}
Let $C$ be a cyclic code in $\F_q(n)$ and let $e,g\in C$ be the idempotent generator and the generator polynomial of $C$, respectively. For $f\in \{e,g\}$ we have that if $d^\ast(\varphi_{\alpha,f})=\omega(f)$  for some  $\alpha \in U_n$ then $d(C)=\Delta(C)$ and $\alpha\in \R(C)$.
\end{corollary}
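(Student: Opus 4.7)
The plan is to run a short chain of inequalities that squeezes $\omega(f)$, $d^{\ast}(\varphi_{\alpha,f})$, $d^{\ast}_{\alpha}(C)$, $\Delta(C)=d^{\ast}(C)$ and $d(C)$ together under the hypothesis that the two extremes coincide.

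First I would record the four ingredients that have already been established. From inequality~(\ref{dist aparente imagen menor que peso}) we know $d^{\ast}(\varphi_{\alpha,f})\le \omega(f)$ for every $\alpha\in U_n$ and every $f\in \F_q(n)$. From the discussion just before~(\ref{cotaBCH}), because $f\in\{e,g\}$ and both the idempotent generator and the generator polynomial yield the same support under the Fourier transform as every nonzero codeword of $C$ majorises, we have $d^{\ast}_{\alpha}(C)=d^{\ast}(\varphi_{\alpha,e})=d^{\ast}(\varphi_{\alpha,g})$, hence $d^{\ast}_{\alpha}(C)=d^{\ast}(\varphi_{\alpha,f})$. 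By definition $d^{\ast}_{\alpha}(C)\le d^{\ast}(C)$, and by the identity $d^{\ast}(C)=\Delta(C)$ of Camion together with~(\ref{cotaBCH}) we also have $\Delta(C)\le d(C)$. Finally, since $f\in C\setminus\{0\}$, the weight satisfies $d(C)\le \omega(f)$.

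Stringing these together gives
\[
\omega(f)=d^{\ast}(\varphi_{\alpha,f})=d^{\ast}_{\alpha}(C)\le d^{\ast}(C)=\Delta(C)\le d(C)\le \omega(f),
\]
where the leftmost equality is the hypothesis. All inequalities are therefore equalities, which simultaneously gives $d(C)=\Delta(C)$ and $d^{\ast}_{\alpha}(C)=d^{\ast}(C)$; the latter is exactly the condition that $\alpha\in \R(C)$.

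There is no real obstacle here; the only thing to make sure of is the justification $d^{\ast}_{\alpha}(C)=d^{\ast}(\varphi_{\alpha,f})$ for $f\in\{e,g\}$, which is precisely the content of the computation preceding~(\ref{cotaBCH}) (namely that $\mathrm{supp}(\varphi_{\beta,c})\subseteq \mathrm{supp}(\varphi_{\beta,g})=\mathrm{supp}(\varphi_{\beta,e})$ for every $c\in C$, so that the minimum of the apparent distances over codewords is attained at $e$ and $g$). Once that is cited, the result is immediate.
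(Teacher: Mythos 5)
Your proof is correct and follows essentially the same squeeze as the paper's: both bound $d^{\ast}(\varphi_{\alpha,f})=\omega(f)$ below by $d(C)$ and above by $d^{\ast}(C)=\Delta(C)\le d(C)$ via (\ref{cotaBCH}), forcing equality throughout (the paper phrases the upper bound through a comparison with $d^{\ast}(\varphi_{\beta,f})$ for $\beta\in\R(C)$, which is the same step as your $d^{\ast}_{\alpha}(C)\le d^{\ast}(C)$). Your explicit justification that $d^{\ast}_{\alpha}(C)=d^{\ast}(\varphi_{\alpha,f})$ for $f\in\{e,g\}$ is exactly the content of the paragraph preceding (\ref{cotaBCH}), so nothing is missing.
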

\begin{proof}
By hypothesis, $d^\ast(\varphi_{\alpha,f})=\omega(f)\geq d(C)$. Now, if $\beta \in \R(C)$ then $d^\ast(\varphi_{\alpha,f})\leq d^\ast(\varphi_{\beta,f})$ and so (\ref{cotaBCH}) get us the result.
\end{proof}

In the following table we list non trivial cyclic codes of lenght at most 31, satifying the conditions of the corollary above; that is, $d^*\varphi_{\alpha,f}=\omega(f)$ for some $\alpha\in U_n$. Here, $\overline{D(C)}=\Z_n\setminus D(C)$.  Computations were done by using GAP4r7.

\[\begin{array}{|c|c|c|c|}
\hline
 \text{Lenght}& \overline{D(C)}& \dim_\F(C)&d(C) \\ \hline
 7 & C_2(3) & 3& 4 \\ \hline
 & C_2(1) & 3 & 4 \\ \hline
 & C_2(0)\cup C_2(3)& 4& 3 \\ \hline
 & C_2(0)\cup C_2(1)& 4& 3 \\ \hline
 9  & C_2(3) & 2 & 6 \\ \hline 
 & C_2(0) \cup C_2(3) & 3 & 3 \\ \hline 
 & C_2(1) & 6 & 2 \\ \hline 
\end{array}\]

\[\begin{array}{|c|c|c|c|}
\hline
 \text{Lenght}& \overline{D(C)}& \dim_\F(C)&d(C) \\ \hline
 15  & C_2(5) & 2 & 10 \\ \hline 
 & C_2(0) \cup C_2(5) & 3 & 5 \\ \hline 
 & C_2(1) & 4 & 8 \\ \hline 
 & C_2(7) & 4 & 8 \\ \hline 
& C_2(0) \cup C_2(3) & 5 & 3 \\ \hline 
& C_2(0) \cup C_2(7) & 5 & 7 \\ \hline 
 & C_2(0) \cup C_2(1) & 5 & 7 \\ \hline
 & C_2(3) \cup C_2(5) & 6 & 6 \\ \hline 
 & C_2(0) \cup C_2(5) \cup C_2(7) & 7 & 5 \\ \hline 
 & C_2(1) \cup C_2(3) & 8 & 4 \\ \hline 
 & C_2(3) \cup C_2(7) & 8 & 4 \\ \hline 
 & C_2(3) \cup C_2(7) & 8 & 4 \\ \hline 
   & C_2(1) \cup C_2(5) \cup C_2(7) & 10 & 2 \\ \hline 
  21 & C_2(7) & 2 & 14 \\ \hline
  & C_2(3) & 3 & 12 \\ \hline 
  & C_2(9) & 3 & 12 \\ \hline 
  & C_2(0) \cup C_2(7) & 3 & 7 \\ \hline 
  & C_2(0) \cup C_2(3) & 4 & 9 \\ \hline 
  & C_2(0) \cup C_2(9) & 4 & 9 \\ \hline  
  & C_2(3) \cup C_2(7) & 5 & 10 \\ \hline 
  & C_2(7) \cup C_2(9) & 5 & 10 \\ \hline 
  & C_2(0) \cup C_2(3) \cup C_2(9) & 7 & 3 \\ \hline 
  & C_2(1) \cup C_2(7) & 8 & 6 \\ \hline 
  & C_2(5) \cup C_2(7) & 8 & 6 \\ \hline 
  & C_2(1) \cup C_2(9) & 9 & 4 \\ \hline 
  & C_2(3) \cup C_2(5) & 9 & 4 \\ \hline 
  & C_2(0) \cup C_2(5) \cup C_2(9) & 10 & 5 \\ \hline 
   & C_2(0) \cup C_2(1) \cup C_2(3) & 10 & 5 \\ \hline 
  & C_2(5) \cup C_2(7) \cup C_2(9) & 11 & 6 \\ \hline 
   & C_2(0) \cup C_2(1) \cup C_2(7) \cup C_2(9) & 12 & 3 \\ \hline 
%
 25 & C_2(3)\cup C_2(5) & 5 & 5 \\ \hline
 27 & C_2(9) & 2 & 18 \\ \hline
 & C_2(3) & 5 & 6 \\ \hline
 & C_2(1) & 18 & 2 \\ \hline
 & C_2(0)\cup C_2(9) & 3 & 9 \\ \hline
31 & C_2(1) & 5 & 16 \\ \hline 
& C_2(5) & 5 & 16 \\ \hline 
& C_2(15) & 5 & 16 \\ \hline 
& C_2(0) \cup C_2(1) & 6 & 15 \\ \hline 
& C_2(0) \cup C_2(15) & 6 & 15 \\ \hline 
& C_2(3) \cup C_2(7) & 10 & 6 \\ \hline 
& C_2(5) \cup C_2(11) & 10 & 10 \\ \hline 
 & C_2(1) \cup C_2(3) \cup C_2(15) & 15 & 6 \\ \hline 
 & C_2(1) \cup C_2(5) \cup C_2(11) & 15 & 6 \\ \hline 
 & C_2(1) \cup C_2(7) \cup C_2(15) & 15 & 6 \\ \hline 
 & C_2(5) \cup C_2(9) \cup C_2(15) & 15 & 6 \\ \hline 
 & C_2(0) \cup C_2(1) \cup C_2(3) \cup C_2(7) & 16 & 5 \\ \hline 
 & C_2(0) \cup C_2(1) \cup C_2(11) \cup C_2(15) & 16 & 5 \\ \hline 
 & C_2(0) \cup C_2(1) \cup C_2(5) \cup C_2(15) & 16 & 5 \\ \hline 
 & C_2(0) \cup C_2(3) \cup C_2(5) \cup C_2(11) & 16 & 5 \\ \hline 
 & C_2(0) \cup C_2(5) \cup C_2(7) \cup C_2(11) & 16 & 5 \\ \hline 
 & C_2(0) \cup C_2(3) \cup C_2(5) \cup C_2(11) & 16 & 5 \\ \hline 
\end{array}\]

Let us comment how these results allow us to construct cyclic codes and to compute its apparent distance (or the BCH bound). First, let us observe that for any cyclic code $C$ generated by $e=e^2\in \F_q(n)$ one has that $\varphi_{\alpha, e}$ is an idempotent in $(\Le,\star)$. So,  $\varphi_{\alpha,e}[i]=e(\alpha^i)=0$ if $i\in D_\alpha(C)$ and $1$ otherwise.

Now, let $\{a_1,\dots,a_h\}$ be a complete set of represantives of the $q$-cyclotomic cosets modulo $n$. For each choice $D=\cup_{j=1}^t C_q(a_{i_j})$, with $i_j\in \{1,\dots,h\}$ and $1\leq t\leq h$, we denote by $F_{D}\in \F_q^n$ the vector such that $F_{D}[i]=0$  if $i\in D$ and $1$ otherwise. Then $F_D$ may be viewed as the image under the Fourier transform of the idempotent generator of a cyclic code $C$ in $\F_q(n)$ such that $D=D_\alpha(C)$ with respect to some $\alpha \in U_n$. That is, if $C$ is the cyclic code with defining set $D_\alpha(C)=D$, with respecct to $\alpha\in U_n$, and $e^2=e$ is its idempotent generator then we have that $F_D=\varphi_{\alpha,e}\in \F_q^n$. To compute the apparent distance $d^\ast(C)$ we first consider the set $A(n)=\{a_{i_1},\dots,a_{i_k}\}\subseteq\{a_1,\dots,a_h\}$. Then, for every $j=1,\dots,k$, let $\beta_j\in U_n$ be such that $\beta_j^{a_{i_j}}=\alpha$; recall that this implies $D_{\beta_j}(C)=a_{i_j}\cdot D_\alpha(C)$. The apparent distance of $\varphi_{\beta_j,e}$ is the length of the biggest chain of consecutive zeros (modulo $n$) in $F_{D_{\beta_j}(C)}$ plus $1$. So, $\displaystyle d^\ast(C)=\max_{j=1,\dots,k} d^\ast (F_{D_{\beta_j}(C)})$.

\begin{example}\label{ejemplo calculo de parametros con la matriz F_D(C)}\rm{
 Set $n=21$, $q=2$ and $A(21)=\{1,5\}$. Consider the $2$-cyclotomic cosets: $C_2(0)$, $C_2(1)$, $C_2(3)$, $C_2(5)$, $C_2(7)$, $C_2(9)$ listed in Example~\ref{distancia de bose mayor que dist BCH}. Choose $D=C_2(1)\cup C_2(3)\cup C_2(7)$. Then
\begin{eqnarray}
 F_D&=& \left(1\,0\,0\,0\,0\,1\,0\,0\,0\,1\,1\,0\,0\,1\,0\,1\,0\,1\,1\,1\,1\right). \nonumber
\end{eqnarray}
Let $C=\left\langle e\right\rangle$ be the cyclic code such that $D_\alpha(C)=D$ for some $\alpha\in U_{21}$. Then  $d^\ast(\varphi_{\alpha,e})=5$. We only need to consider $\beta\in U_{21}$ such that $\beta^5=\alpha$. In that case, $D_\beta(C)=5\cdot D_\alpha(C)=C_2(5)\cup C_2(9)\cup C_2(7)$. Then  
\begin{eqnarray}
 F_{D_\beta(C)}&=& \left(1\,1\,1\,1\,1\,0\,1\,0\,1\,0\,0\,1\,1\,0\,0\,0\,1\,0\,0\,0\,0\right). \nonumber
\end{eqnarray}
So $d^\ast(F_{D_\beta(C)})=5$ too.}
Hence $d^\ast(C)=5$ and $\R(C)=\{\beta, \beta^5 \}$.  The reader may check that $C$ has four BCH bounds, $\delta=2,3,4,5$. 
\end{example}

\section{The minimum distance and the BCH bound}

For an arbitrary element $g\in \Le(n)$, which we may view as a polynomial with $\deg(g)\leq n-1$, it is easy to see that the equality $\gcd(g,x^n-1)=\gcd(x^hg,x^n-1)$ holds for any $h\in \{0,\dots,n-1\}$ as $x^h$ and $x^n-1$ are relatively prime polynomials; so, we may write
\begin{equation}\label{mcd}
 m_g=\gcd(x^hg,x^n-1)
\end{equation}
as $m_g$ does not depend on $h$.  For any $h\in \{0,\dots,n-1\}$ we also write
\begin{equation}\label{division}
 x^hg=(x^n-1)f_{g,h}+\overline{x^hg}
\end{equation}
where $0\leq \deg(\overline{x^hg})<n$. Note that if $g\neq 0$ then  $\overline{x^hg}\neq 0$ because $\deg(g)<n$. By using results in \cite{C} and \cite{Ch} (see also \cite[Theorem 8.6.31]{S}) we obtain the following result.

\begin{lemma}\label{propiedades de d^*}
Consider $g\in \Le(n)$ and let $m_g$ be as above. Then
\begin{enumerate}
 \item $d^\ast(g)\leq n-\deg(m_g)$.
\item If $g\mid x^n-1$ then $d^\ast(g)=n-\deg(g)$.
\end{enumerate}
\end{lemma}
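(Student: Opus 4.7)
The plan is to derive both parts directly from the divisibility structure captured by $m_g$, together with the division identity \eqref{division}.

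For part (1), I would first handle the trivial case $g=0$, where both sides equal $0$ by Definition~\ref{aparentepol}. Assuming $g\neq 0$, the key observation is that $m_g$ divides $\overline{x^h g}$ for every $h\in\{0,\dots,n-1\}$: indeed, by definition $m_g \mid x^h g$ and $m_g \mid x^n-1$, so from \eqref{division} we get $m_g \mid \overline{x^h g} = x^h g - (x^n-1)f_{g,h}$. Since $\deg(g)<n$ forces $\overline{x^h g}\neq 0$, this divisibility yields $\deg(m_g)\le \deg(\overline{x^h g})$, hence $n-\deg(\overline{x^h g})\le n-\deg(m_g)$ for every admissible $h$. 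Taking the maximum over $h$ gives $d^\ast(g)\le n-\deg(m_g)$.

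For part (2), I plan to show that when $g\mid x^n-1$ the equality $m_g=g$ holds (up to a unit), and then use $h=0$ to attain the bound of part (1). For the equality, first note that $g\mid x^h g$ and $g\mid x^n-1$ imply $g\mid m_g$. For the reverse, observe that $\gcd(x^h,x^n-1)=1$ (since $x^n-1$ has nonzero constant term); combined with $m_g\mid x^n-1$ this forces $\gcd(m_g,x^h)=1$, so from $m_g\mid x^h g$ we obtain $m_g\mid g$. Hence $\deg(m_g)=\deg(g)$, and part (1) yields $d^\ast(g)\le n-\deg(g)$. To attain equality, pick $h=0$: since $\deg(g)<n$, the polynomial $\overline{x^0 g}=g$ already lies in reduced form, so $n-\deg(g)$ appears in the set whose maximum defines $d^\ast(g)$, giving $d^\ast(g)\ge n-\deg(g)$.

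The only mildly delicate point is the argument $\gcd(m_g,x^h)=1$, which relies on the coprimality $\gcd(x^h,x^n-1)=1$; everything else is bookkeeping with \eqref{mcd} and \eqref{division}. No serious obstacle is expected, since the characterization of $m_g$ as independent of $h$ (stated just before the lemma) does all the heavy lifting.
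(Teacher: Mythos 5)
Your proof is correct and follows essentially the same route as the paper's: part (1) rests on the observation that $m_g\mid \overline{x^hg}$ for every $h$, and part (2) combines the trivial lower bound $d^\ast(g)\geq n-\deg(g)$ (taking $h=0$) with the fact that $g$ and $m_g$ are associates, then applies part (1). The paper states these steps more tersely (asserting $g=s\,m_g$ for a scalar $s$ without your explicit coprimality argument for $\gcd(m_g,x^h)=1$), but the content is the same.
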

\begin{proof}
 \textit{(1)} It comes from the fact that $m_g\mid \overline{x^hg}$ for any $0\leq h\leq n-1$, and from Definition \ref{aparentepol}.
\textit{(2)} By the definition of $d^\ast(g)$ we have that $d^\ast(g)\geq n-\deg(g)$. To get the converse inequality note that $g=s m_g$, for some $s\in\F_q$, and apply \textit{(1)}.
\end{proof}

Now let $C$ be a cylic code in $\F_q(n)$ and let $c\in C$ be any codeword. By (\ref{dist aparente imagen menor que peso}) we have that $d^\ast(\varphi_{\alpha,c})\leq \omega(c)$. We wonder if the equality may occur. Next result will be helpful to find an answer (see \cite[Theorem 4.1]{C} and \cite[Theorem 2]{Ch}).

\begin{lemma}\label{lema de camion y chon}
 Let $C$ be a cyclic code in $\F_q(n)$ and $c\in C$. Then
 $n-\deg\left(m_{\varphi_{\alpha,c}}\right)=\omega(c)$, for all $\alpha\in U_n$.
\end{lemma}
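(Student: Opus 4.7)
The plan is to compute $\deg(m_{\varphi_{\alpha,c}})$ explicitly by counting roots, and then recognize the resulting quantity as $n-\omega(c)$ via equation~(\ref{peso y dominio de def}).

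First I would use the factorization of $x^n-1$ over $\Le$. Since $\gcd(n,q)=1$ and $U_n\subseteq \Le$, the polynomial $x^n-1$ splits into distinct linear factors over $\Le$, namely $x^n-1=\prod_{i=0}^{n-1}(x-\alpha^i)$. Consequently, for any polynomial $g\in \Le(n)$, the gcd $\gcd(g,x^n-1)$ is precisely the product of those linear factors $(x-\alpha^i)$ for which $g(\alpha^i)=0$; its degree therefore equals the cardinality of the defining set $D_\alpha(g)=\{i\in\Z_n\tq g(\alpha^i)=0\}$. Applying this to $g=\varphi_{\alpha,c}$ and using the observation in (\ref{mcd}) that $m_{\varphi_{\alpha,c}}=\gcd(\varphi_{\alpha,c},x^n-1)$, we obtain
\[
\deg\!\left(m_{\varphi_{\alpha,c}}\right)=\bigl|D_\alpha(\varphi_{\alpha,c})\bigr|.
\]

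Next I would invoke equation~(\ref{peso y dominio de def}), which states that $\omega(c)=n-|D_\alpha(\varphi_{\alpha,c})|$ for every $c\in\F_q(n)$ and every $\alpha\in U_n$ (this is really Remark~\ref{basicos de la DFT}(2) written in terms of defining sets, since $\text{supp}(c)$ and the complement of $D_\alpha(\varphi_{\alpha,c})$ in $\Z_n$ have the same cardinality via the change of index $i\mapsto -i \pmod n$). Combining the two displayed equalities gives
\[
n-\deg\!\left(m_{\varphi_{\alpha,c}}\right)=n-\bigl|D_\alpha(\varphi_{\alpha,c})\bigr|=\omega(c),
\]
which is the statement of the lemma.

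There is essentially no obstacle here; the only thing that requires a moment of care is the identification between the degree of $\gcd(\varphi_{\alpha,c},x^n-1)$ over $\Le$ and the size of the defining set $D_\alpha(\varphi_{\alpha,c})$, which rests on the fact that $x^n-1$ is separable over $\Le$. Note also that the hypothesis $c\in C$ plays no role: the equality holds for every element of $\F_q(n)$, which is exactly what one needs when later comparing $d^\ast(\varphi_{\alpha,c})$ with $\omega(c)$ through Lemma~\ref{propiedades de d^*}(1).
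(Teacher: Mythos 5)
Your proof is correct and follows essentially the same route as the paper's: both identify $\deg(m_{\varphi_{\alpha,c}})$ with the number of roots of $\varphi_{\alpha,c}$ among the $n$-th roots of unity (using that $x^n-1$ splits into distinct linear factors over $\Le$) and then conclude via Remark~\ref{basicos de la DFT} and equation~(\ref{peso y dominio de def}). You merely spell out the separability argument that the paper leaves implicit; your observation that $c\in C$ is not actually needed is also accurate.
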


\begin{proof}
We have that $n-\deg\left(m_{\varphi_{\alpha,c}}\right)=|\{\alpha^j\tq \varphi_{\alpha,c}(\alpha^j)\neq 0\}|$. By Remark~\ref{basicos de la DFT} and (\ref{peso y dominio de def}) we are done.
\end{proof}

Note that by Lemma \ref{propiedades de d^*} we have that the apparent distance of any $f\in \Le(n)$ is less than or equal to the number of nonzeros of $m_f$. The following result shows us when the equality holds.

\begin{proposition}\label{igualdad distancia aparente y n-grado}
Consider $f\in \Le(n)$ and let $m_f$ be as in (\ref{mcd}). Then $d^\ast(f)= n-\deg(m_f)$ if and only if there exists $h\in\{0,\dots,n-1\}$ such that $\overline{x^hf}\mid x^n-1$ (equivalently, $\overline{x^hf}$ and $m_f$ are associated polynomials in $\Le[x]$).
\end{proposition}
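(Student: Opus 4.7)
My plan is to exploit a single divisibility observation that ties the gcd $m_f$ to every remainder $\overline{x^h f}$, and then read off both implications from the maximum characterization in Definition \ref{aparentepol}.

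The key preliminary step is to record that $m_f$ divides $\overline{x^h f}$ for every $h\in\{0,\dots,n-1\}$. Indeed, by (\ref{mcd}) the polynomial $m_f$ divides both $x^h f$ and $x^n-1$, so by (\ref{division}) it divides $\overline{x^h f}=x^h f-(x^n-1)f_{g,h}$. In particular $\deg(\overline{x^h f})\geq \deg(m_f)$ for every $h$, which already re-proves part (1) of Lemma \ref{propiedades de d^*} and fixes the inequality $d^\ast(f)\leq n-\deg(m_f)$ that will be combined with the hypotheses.

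For the forward direction, assume $d^\ast(f)=n-\deg(m_f)$. By Definition \ref{aparentepol} there exists $h^\ast$ attaining the maximum, so $\deg(\overline{x^{h^\ast} f})=\deg(m_f)$. Together with $m_f\mid \overline{x^{h^\ast} f}$ and the fact that both polynomials are nonzero of the same degree, they must be associated in $\Le[x]$. Since $m_f\mid x^n-1$ by construction, so does $\overline{x^{h^\ast} f}$. Conversely, suppose $\overline{x^h f}\mid x^n-1$ for some $h$. Using (\ref{division}) once more, $x^h f=(x^n-1)f_{g,h}+\overline{x^h f}$, and since $\overline{x^h f}$ divides $x^n-1$ it also divides $x^h f$; hence it divides $m_f=\gcd(x^h f,x^n-1)$. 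Combined with the preliminary step this forces $\overline{x^h f}$ and $m_f$ to be associated, so $\deg(\overline{x^h f})=\deg(m_f)$, yielding $d^\ast(f)\geq n-\deg(m_f)$. The reverse inequality was already established, giving equality.

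I do not anticipate any real obstacle: the proof is in essence an exercise with the division algorithm and the gcd property, and the parenthetical equivalence with being associated polynomials is actually what the argument produces directly in both directions.
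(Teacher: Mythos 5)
Your proof is correct and follows essentially the same route as the paper's: both directions rest on the division identity $x^hf=(x^n-1)f_{g,h}+\overline{x^hf}$ and the mutual divisibility between $m_f$ and $\overline{x^hf}$, with the forward direction reading off an $h$ attaining the maximum in Definition~\ref{aparentepol} and the converse combining the resulting degree equality with the upper bound $d^\ast(f)\leq n-\deg(m_f)$. The only cosmetic difference is that you argue the converse directly from one term of the maximum rather than invoking Lemma~\ref{propiedades de d^*}(2) as the paper does, which amounts to the same computation.
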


\begin{proof}
 Suppose first that the equality holds. By definition of apparent distance we know that there exists $h\in\{0,\dots,n-1\}$ such that $d^\ast(f)=n-\deg\left(\overline{x^hf}\right)$. Hence $\deg\left(\overline{x^hf}\right)=\deg\left(m_f\right)$. By (\ref{mcd}) and (\ref{division}) we have that $m_f$ and  $\overline{x^hf}$ have exactly the same set of zeros and hence they are associated polynomials, or equivalently, $\overline{x^hf}\mid x^n-1$.

Conversely, suppose that there exists $h\in\{0,\dots,n-1\}$ such that $\overline{x^hf}\mid x^n-1$. Again by (\ref{division}) and (\ref{mcd}), $\overline{x^hf}$ and $m_f$ must be associated polynomials. By definition of apparent distance we have that $d^\ast(f)=d^\ast\left(\overline{x^hf}\right)$ and by Lemma \ref{propiedades de d^*}\textit{(2)}, $d^\ast\left(\overline{x^hf}\right)=n-\deg\left(\overline{x^hf}\right)$. The result follows immediately.
\end{proof}

Now we deal with our first problem. We are going to present some results that give theoretical characterizations for a given cyclic code to satisfy the equality $d(C)=\Delta(C)$.


\begin{theorem}\label{teo principal}
Let $n$ be a positive integer, $p$ a prime number and $q$ a power of $p$. Assume that $\gcd(n,q)=1$. Consider the field $\F_q$ and an extension $\Le|\F_q$ such that $U_n \subseteq \Le$. Let $C$ be a cyclic code in $\F_q(n)$. Then $d(C)=\Delta(C)$ if and only if there exists a polynomial $f \in \Le(n)$, such that
\begin{enumerate}
 \item $d^\ast(f)=d^\ast(C)$.
\item $d^\ast (f)=n-\deg(m_{f})$.
\item $\varphi^{-1}_{\alpha,f}\in C$, for some $\alpha\in \R(C)$.
\end{enumerate}
Moreover, in this case, there exists $h\in\{0,\dots,n-1\}$ such that $\overline{x^h f}\mid x^n-1$.
\end{theorem}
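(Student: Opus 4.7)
The plan is to derive both implications directly from the machinery already assembled: the Camion equality $d^\ast(C)=\Delta(C)$, Lemma~\ref{lema de camion y chon} giving $\omega(c)=n-\deg(m_{\varphi_{\alpha,c}})$, the bound $d^\ast(\varphi_{\alpha,c})\le \omega(c)$ from (\ref{dist aparente imagen menor que peso}), and Proposition~\ref{igualdad distancia aparente y n-grado} for the final ``moreover'' clause. The essential idea is that a minimum weight codeword, seen through the Fourier transform with respect to an optimal root, is exactly the polynomial $f$ demanded by the three conditions.

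For the direction $(\Rightarrow)$, I assume $d(C)=\Delta(C)$. I pick a nonzero codeword $c\in C$ with $\omega(c)=d(C)$ and an optimal root $\alpha\in\R(C)$, and set $f=\varphi_{\alpha,c}$. Condition \textit{(3)} is then immediate from $\varphi^{-1}_{\alpha,f}=c\in C$. For condition \textit{(2)}, Lemma~\ref{lema de camion y chon} yields $n-\deg(m_f)=\omega(c)=d(C)=\Delta(C)=d^\ast(C)$; combined with (\ref{dist aparente imagen menor que peso}) this forces $d^\ast(f)\le d^\ast(C)$, while $\alpha\in\R(C)$ gives $d^\ast(f)\ge d^\ast_\alpha(C)=d^\ast(C)$. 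Thus $d^\ast(f)=d^\ast(C)=n-\deg(m_f)$, establishing both \textit{(1)} and \textit{(2)} at once.

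For the direction $(\Leftarrow)$, I assume the three conditions and pick $\alpha\in\R(C)$ together with the polynomial $f$ provided by \textit{(3)}. Setting $c=\varphi^{-1}_{\alpha,f}\in C$ I get, via Lemma~\ref{lema de camion y chon} and conditions \textit{(1)}--\textit{(2)},
\[
\omega(c)=n-\deg(m_{\varphi_{\alpha,c}})=n-\deg(m_f)=d^\ast(f)=d^\ast(C)=\Delta(C).
\]
(Here $f\neq 0$ because otherwise $d^\ast(C)=0$, which forces the trivial case $C=0$ and can be dismissed.) Since $c$ is a nonzero codeword with $\omega(c)=\Delta(C)$ and $d(C)\ge \Delta(C)$ by (\ref{cotaBCH}), the sandwich $\Delta(C)=\omega(c)\ge d(C)\ge\Delta(C)$ gives $d(C)=\Delta(C)$.

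Finally, the ``moreover'' clause is a direct application of Proposition~\ref{igualdad distancia aparente y n-grado}: under condition \textit{(2)} the equality $d^\ast(f)=n-\deg(m_f)$ is equivalent to the existence of $h\in\{0,\dots,n-1\}$ with $\overline{x^h f}\mid x^n-1$. I do not anticipate a genuine obstacle; the only delicate bookkeeping is ensuring that the optimal root $\alpha\in\R(C)$ is used consistently in both directions so that the inequality $d^\ast_\alpha(C)\le d^\ast(\varphi_{\alpha,c})$ can be combined with the upper bound $d^\ast(\varphi_{\alpha,c})\le\omega(c)$ to pin $d^\ast(f)$ exactly at $d^\ast(C)$.
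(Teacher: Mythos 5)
Your proposal is correct and follows essentially the same route as the paper: both directions take $f=\varphi_{\alpha,c}$ for a minimum-weight codeword (respectively $c=\varphi^{-1}_{\alpha,f}$), run the same chain of inequalities through Lemma~\ref{lema de camion y chon}, the bound (\ref{dist aparente imagen menor que peso}), Camion's equality $d^\ast(C)=\Delta(C)$ and (\ref{cotaBCH}), and finish the ``moreover'' clause with Proposition~\ref{igualdad distancia aparente y n-grado}. Your explicit remark that $f\neq 0$ (excluding the trivial code) is a small point of care the paper leaves implicit, but it does not change the argument.
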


\begin{proof}
 First, suppose that $d(C)=\Delta(C)$. Then we have that $d(C)=d^\ast(C)$. Let $c\in C$ such that $\omega(c)=d(C)$, consider $\alpha \in \R(C)$ and set, as in (\ref{mcd}), $m_{\varphi_{\alpha,c}}=\gcd(\varphi_{\alpha,c},x^n-1)$. By definition of apparent distance and by applying results above, we have that
	\[\omega(c)\geq d^\ast(\varphi_{\alpha,c})\geq d^\ast_\alpha(C)=d^\ast(C)=d(C)=\omega(c)=n-\deg\left(m_{\varphi_{\alpha,c}}\right).\]
Hence $d^\ast(\varphi_{\alpha,c})=d^\ast(C)$, since $d^\ast\left(\varphi_{\alpha,c}\right)=n-\deg\left(m_{\varphi_{\alpha,c}}\right)$. So, $f=\varphi_{\alpha,c}$ satisfies all required conditions.

Conversely, suppose there exists $f\in \Le(n)$ satisfying conditions \textit{(1 -- 3)} of the statement. By Lemma~\ref{lema de camion y chon} and the definition of minimum distance, we have that $d^\ast(f)=\omega(\varphi^{-1}_{\alpha,f})\geq d(C)$. Then by Condition \textit{(1)}, $d^\ast(C)\geq d(C)$, and hence by (\ref{cotaBCH}), $\Delta(C)=d(C)$.

 The final assertion follows directly from Proposition \ref{igualdad distancia aparente y n-grado}.
\end{proof}

So, to check if a code satisfies the conditions in the theorem above, Proposition \ref{igualdad distancia aparente y n-grado} shows us that we have to focus on properties of some divisors of $x^n-1$. After Corollary \ref{resultado principal con el D(C)} we will make some comments about complexity in order to consider those divisors.

%

\begin{corollary}\label{resultado principal con divisores de x^n-1} Let $C$ be a cyclic code in $\F_q(n)$. Then $d(C)=\Delta(C)$ if and only if there exist $k\in\{0,\dots,n-1\}$ and a divisor $g\mid x^n-1$, in $\Le[x]$, such that setting $f=\overline{x^kg}$, the following conditions hold
\begin{enumerate}
 \item $d^\ast(f)=d^\ast(C)$ .
\item $\varphi^{-1}_{\alpha,f}\in C$, for some $\alpha\in \R(C)$.
\end{enumerate}
\end{corollary}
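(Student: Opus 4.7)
The plan is to deduce this from Theorem \ref{teo principal} combined with Proposition \ref{igualdad distancia aparente y n-grado}, which together say that the ``mystery polynomial'' $f$ produced by the theorem must in fact be $\overline{x^k g}$ for some divisor $g$ of $x^n-1$. So the corollary is essentially a repackaging of the theorem that trades condition $d^\ast(f)=n-\deg(m_f)$ for the more concrete requirement that $f$ be a cyclic shift of such a divisor.

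For the forward implication, I would start by assuming $d(C)=\Delta(C)$ and applying Theorem \ref{teo principal}. This gives a polynomial $f\in \Le(n)$ satisfying the three conditions of the theorem together with the ``moreover'' clause, which yields some $h\in\{0,\dots,n-1\}$ with $\overline{x^h f}\mid x^n-1$. Then I would set $g:=\overline{x^h f}$, which is the desired divisor of $x^n-1$, and recover $f$ in the form $\overline{x^k g}$ by the simple observation that $g\equiv x^h f\pmod{x^n-1}$, so multiplying by $x^{n-h}$ and using $x^n\equiv 1$ gives $f\equiv x^{n-h}g\pmod{x^n-1}$; taking $k\equiv n-h\pmod n$ does the job. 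Conditions (1) and (2) of the corollary are now exactly conditions (1) and (3) of the theorem, applied to the same $f$.

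For the reverse implication, I would assume the existence of $k$ and $g\mid x^n-1$ as in the statement and simply verify the three hypotheses of Theorem \ref{teo principal} for $f=\overline{x^k g}$. Conditions (1) and (3) of the theorem are handed over by hypothesis. The remaining condition, $d^\ast(f)=n-\deg(m_f)$, is precisely what Proposition \ref{igualdad distancia aparente y n-grado} characterizes: it suffices to exhibit some $h\in\{0,\dots,n-1\}$ with $\overline{x^h f}\mid x^n-1$. I would take $h\equiv n-k\pmod n$ and compute $\overline{x^h f}=\overline{x^{n-k}\cdot x^k g}=\overline{x^n g}=\overline{g}=g$, which divides $x^n-1$ by assumption. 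Theorem \ref{teo principal} then delivers $d(C)=\Delta(C)$.

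I do not anticipate any serious obstacle: the whole argument is a translation between the two equivalent formulations of ``$d^\ast(f)=n-\deg(m_f)$'' provided by Proposition \ref{igualdad distancia aparente y n-grado}. The only bookkeeping care needed is tracking indices modulo $n$ when moving between $h$ and $k$, and silently discarding the degenerate case $g=x^n-1$ (for which $\overline{g}=0$, forcing $f=0$ and corresponding to a trivial code situation).
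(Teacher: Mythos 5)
Your proof is correct and follows essentially the same route as the paper: the paper's own (much terser) proof simply sets $h=n-k$, notes $g=\overline{x^hf}$, and invokes Proposition~\ref{igualdad distancia aparente y n-grado} together with Theorem~\ref{teo principal}, which is exactly the translation you carry out in detail. Your extra bookkeeping (recovering $k=n-h$ in the forward direction and flagging the degenerate case $g=x^n-1$) is sound and only makes explicit what the paper leaves implicit.
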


\begin{proof}
 Set $h=n-k$. Then $g=\overline{x^hf}$ and the result follows from Proposition~\ref{igualdad distancia aparente y n-grado} and the theorem above.
\end{proof}

We note that, in the setting of the previous corollary, it may happen that there exist $\alpha,\beta\in U_n$ such that $\varphi^{-1}_{\alpha,f}\in C$ but $\varphi^{-1}_{\beta,f}\notin C$. 

We may rewrite the condition \textit{(3)} in Theorem~\ref{teo principal} or \textit{(2)} in Corollary~\ref{resultado principal con divisores de x^n-1}, as follows.

\begin{corollary}\label{resultado principal con el D(C)}
 Let $C$ be a cyclic code in $\F_q(n)$. Then $d(C)=\Delta(C)$ if and only if there exist $k\in\{0,\dots,n-1\}$ and a divisor $g\mid x^n-1$, in $\Le[x]$, such that  the following conditions hold.
\begin{enumerate}
 \item $d^\ast(g)=d^\ast(C)$, and setting $f=\overline{x^kg}$,
\item\label{condicion supp} $supp(f)\subseteq \Z_n\setminus D_\alpha(C)$, for some $\alpha\in \R(C)$,
\item\label{condicion F} $(f(\alpha^j))^q=f(\alpha^j)$, for any $j\in \{0,\dots,n-1\}$.
\end{enumerate}
\end{corollary}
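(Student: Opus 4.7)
The plan is to derive this corollary directly from Corollary \ref{resultado principal con divisores de x^n-1} by rewriting its condition $\varphi^{-1}_{\alpha,f}\in C$ as the pair of conditions (2) and (3), using the basic properties of the discrete Fourier transform collected in Remark \ref{basicos de la DFT}. The statement is essentially an operational restatement of the previous one, and no fundamentally new ideas should be needed.

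First I would show that condition (1) here is equivalent to condition (1) of Corollary \ref{resultado principal con divisores de x^n-1}, namely $d^\ast(f)=d^\ast(C)$. Since $f=\overline{x^kg}$, for every $h\in\{0,\dots,n-1\}$ one has $\overline{x^hf}=\overline{x^{h+k}g}$; as $h$ ranges over $\Z_n$, so does $h+k$. Thus the two families $\{\overline{x^hf}\}_h$ and $\{\overline{x^hg}\}_h$ coincide, so by Definition \ref{aparentepol} we obtain $d^\ast(f)=d^\ast(g)$.

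Next I would translate the condition $\varphi^{-1}_{\alpha,f}\in C$ into the combination of (2) and (3). By Remark \ref{basicos de la DFT}(3), condition (3) is exactly the statement that $\varphi^{-1}_{\alpha,f}\in \F_q(n)$, which is the minimal requirement for this element to be a candidate codeword. Once this holds, $\varphi^{-1}_{\alpha,f}\in C$ is equivalent to the inclusion $D_\alpha(C)\subseteq D_\alpha(\varphi^{-1}_{\alpha,f})$, because a cyclic code is determined by its defining set with respect to any element of $U_n$. Applying Remark \ref{basicos de la DFT}(1) to the polynomial $\varphi^{-1}_{\alpha,f}\in\F_q(n)$ yields $D_\alpha(\varphi^{-1}_{\alpha,f})=\Z_n\setminus supp(\varphi_{\alpha,\varphi^{-1}_{\alpha,f}})=\Z_n\setminus supp(f)$, since $\varphi_\alpha$ and $\varphi^{-1}_\alpha$ are mutually inverse isomorphisms. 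The inclusion then becomes $supp(f)\subseteq \Z_n\setminus D_\alpha(C)$, which is exactly condition (2).

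Combining these two equivalences with Corollary \ref{resultado principal con divisores de x^n-1} (keeping the same $\alpha\in\R(C)$ throughout) finishes the proof in both directions. I do not expect any step to be a real obstacle; the only care needed is to keep the direction of the Fourier isomorphism straight, in particular to remember that the $\alpha^{-i}$ appearing in $\varphi^{-1}_{\alpha,f}$ does not cause trouble here because Remark \ref{basicos de la DFT}(1) is used on $\varphi^{-1}_{\alpha,f}$ viewed as an ordinary element of $\F_q(n)$, after which the identity $\varphi_\alpha\circ\varphi^{-1}_\alpha=\mathrm{id}$ is all that is invoked.
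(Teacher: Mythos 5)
Your proposal is correct and follows the same route as the paper: the paper's proof is a one-line appeal to Remark~\ref{basicos de la DFT} to show that condition \textit{(2)} of Corollary~\ref{resultado principal con divisores de x^n-1} is equivalent to conditions \textit{(2)}+\textit{(3)} here, which is precisely the translation you carry out in detail. Your extra verification that $d^\ast(f)=d^\ast(g)$ is a correct (and harmless) addition that the paper leaves implicit.
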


\begin{proof}
From Remark~\ref{basicos de la DFT}, it comes immediately that condition \textit{(2)} in Corollary~\ref{resultado principal con divisores de x^n-1} holds if and only if conditions \textit{(2)}+\textit{(3)} of this corollary hold. 
\end{proof}

Given a linear code $C$ of length $n$, we wonder about how difficult is to check the equality $\Delta(C)=d(C)$; in other words, using our previous results, how difficult is to find a polynomial satisfying the required conditions?


To apply any of the corollaries above we have to compute the divisors $g \mid x^n-1$ in  $\Le[x]$ with $\deg(g)=n-\Delta(C)$. This means that we have to check at most $h \cdot \binom{n}{n-\Delta(C)}$ polynomials, where $h=|A(n)|$. Clearly, if $\Delta(C)$ is not a ``big'' number we may check all divisors in $\Le[x]$. In case that $\Delta(C)$ was a ``big'' number, we could reduce it by taking an intermediate field, $\F_q\subset \K \subset \Le$, where the number of divisors of $x^n-1$ (in $\K[x]$) is smaller. However, in that case, our searching of codes would not be exhaustive.

For example, consider the binary cyclic code $C$ of length $45$ with $D_\alpha(C)=C_2(3)\cup C_2(5)$, for some $\alpha \in U_{45}$. One may see that $\Delta(C)=3$ and $\dim(C)=35$. Consider $A(45)=\{1,7\}$. To check any of our corollaries above we have to consider $2\binom{45}{42}$-polynomials (note that $2^{14}<2\binom{45}{42}<2^{15}$) so our method works. On the other hand, for codes with apparent distance greater than $5$, we might choose to consider the factors of $x^{45}-1$ in an intermediate ring. For example, in $\F_{2^4}[x]$ there are 15 factors of degree 1 and  10 factors of degree 3. No more than 50 computations. Essentially the same happens in $\F_{2^6}[x]$.


Now we give another sufficient condition to characterize cyclic codes whose apparent distance reaches its minimum distance.

\begin{corollary}
 Let $C$ be a cyclic code in $\F_q(n)$ with generator idempotent $e\in C$. If there exist $h\in\{0,\dots,n-1\}$ and $\alpha \in U_n$  such that $\overline{x^h\varphi_{\alpha,e}}\mid x^n-1$ then $d(C)=\Delta(C)$ and $\alpha \in \R(C)$.
\end{corollary}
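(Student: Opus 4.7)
The plan is to chain together the three main technical results already proven in the excerpt: Proposition \ref{igualdad distancia aparente y n-grado}, Lemma \ref{lema de camion y chon}, and Corollary \ref{corolario condicion suficiente dC=Delta C}. The hypothesis $\overline{x^h\varphi_{\alpha,e}}\mid x^n-1$ is precisely the divisibility condition appearing in Proposition \ref{igualdad distancia aparente y n-grado}, so I would first conclude from that proposition that
\[
d^\ast(\varphi_{\alpha,e}) \;=\; n-\deg\bigl(m_{\varphi_{\alpha,e}}\bigr).
\]

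Next, since $e\in C$ is itself a codeword (the idempotent generator), I would apply Lemma \ref{lema de camion y chon} with the choice $c=e$ to obtain
\[
n-\deg\bigl(m_{\varphi_{\alpha,e}}\bigr) \;=\; \omega(e).
\]
Combining these two equalities yields $d^\ast(\varphi_{\alpha,e})=\omega(e)$.

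Finally, this is exactly the hypothesis (for $f=e$) of Corollary \ref{corolario condicion suficiente dC=Delta C}, which immediately delivers both conclusions of the statement: $d(C)=\Delta(C)$ and $\alpha\in\R(C)$. There is really no obstacle here; the content of the corollary is a direct synthesis of results already in place. The only point worth noting is that we must use $e$ (rather than the generator polynomial $g$) as the codeword in Lemma \ref{lema de camion y chon}, because the hypothesis is phrased in terms of $\varphi_{\alpha,e}$; once that choice is made, the three invocations compose cleanly into a two- or three-line proof.
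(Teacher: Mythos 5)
Your proof is correct and follows exactly the same route as the paper's: Proposition \ref{igualdad distancia aparente y n-grado} gives $d^\ast(\varphi_{\alpha,e})=n-\deg\bigl(m_{\varphi_{\alpha,e}}\bigr)$, Lemma \ref{lema de camion y chon} applied to the codeword $e$ identifies this with $\omega(e)$, and Corollary \ref{corolario condicion suficiente dC=Delta C} then yields both conclusions. Nothing to add.
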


\begin{proof}
 From Proposition \ref{igualdad distancia aparente y n-grado} and Lemma \ref{lema de camion y chon} we may deduce that $d^\ast\left(\varphi_{\alpha,e}\right)=n-\deg\left(m_{\varphi_{\alpha,e}}\right) =\omega(e)$. So, the result follows directly from Corollary~\ref{corolario condicion suficiente dC=Delta C}.
\end{proof}


The previous results give us conditions for a cyclic code $C$ to satisfy the equality $d(C)=\Delta(C)$. Now we deal with our second problem, that is, the construction of such kind of codes.



\begin{corollary}\label{construccion codigos delta=distancia aparente}
Consider an intermediate field $\F_q\subseteq \K\subseteq \Le$, let $g\in \K[x]$ be a divisor of $x^n-1$ and $\beta\in U_n$. If $\varphi^{-1}_{\beta,_{\overline{x^kg}}}$ belongs to $ \F_q(n)$, for some $k\in \{0,\dots,n-1\}$, then the family of permutation equivalent cyclic codes $\left\{C_\alpha=\left(\varphi^{-1}_{\alpha,_{\overline{x^kg}}}\right) \tq \alpha \in U_n\right\}$ satisfies $\Delta(C_\alpha)=d(C_\alpha)$ for all $\alpha\in U_n$.  Moreover, in this case, $\dim_{\F_q}(C_\alpha)=|supp(g)|$, for all $\alpha \in U_n$.
\end{corollary}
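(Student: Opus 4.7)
\emph{Plan.} Set $f=\overline{x^kg}$ and, for each $\alpha\in U_n$, write $c_\alpha=\varphi^{-1}_{\alpha,f}$. The proof splits into three steps: first, verify that every $C_\alpha$ is well defined over $\F_q$ and pin down the apparent distance of $c_\alpha$; second, exploit $g\mid x^n-1$ to force a chain of equalities between $\omega(c_\alpha)$, $\Delta(C_\alpha)$ and $d(C_\alpha)$; finally, read off the dimension and invoke the standard multiplier argument for permutation equivalence.

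For the first step, item \textit{(4)} of Remark~\ref{basicos de la DFT} turns the hypothesis $\varphi^{-1}_{\beta,f}\in\F_q(n)$ into $\varphi^{-1}_{\alpha,f}\in\F_q(n)$ for every $\alpha\in U_n$, so each $C_\alpha$ is indeed an ideal of $\F_q(n)$. Let $e_\alpha$ and $g_{\alpha}$ be its idempotent and polynomial generators. Since $c_\alpha$ also generates $C_\alpha$, the three polynomials $c_\alpha,e_\alpha,g_\alpha$ share the same defining set (each one divides the others modulo $x^n-1$ up to a unit of $\F_q(n)$), and item \textit{(1)} of Remark~\ref{basicos de la DFT} then says that $\varphi_{\alpha,c_\alpha}$, $\varphi_{\alpha,e_\alpha}$ and $\varphi_{\alpha,g_\alpha}$ have identical supports. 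Because $d^\ast$ depends only on the support of a polynomial (see the comment after Definition~\ref{aparentepol}), one obtains
\[ d^\ast(\varphi_{\alpha,c_\alpha})=d^\ast(\varphi_{\alpha,e_\alpha})=d^\ast_\alpha(C_\alpha), \]
the second equality being (\ref{cotaBCH}).

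For the second step, note $\overline{x^{n-k}f}=g\mid x^n-1$, so Proposition~\ref{igualdad distancia aparente y n-grado} yields $d^\ast(f)=n-\deg(m_f)$, and Lemma~\ref{lema de camion y chon} applied to $c_\alpha$ gives $n-\deg(m_{\varphi_{\alpha,c_\alpha}})=\omega(c_\alpha)$; since $\varphi_{\alpha,c_\alpha}=f$, these combine to $d^\ast(f)=\omega(c_\alpha)$. Chaining with Step~1 and with the inequalities $\Delta(C_\alpha)=d^\ast(C_\alpha)\leq d(C_\alpha)\leq\omega(c_\alpha)$ coming from (\ref{cotaBCH}) and $c_\alpha\in C_\alpha$, we get
\[ \omega(c_\alpha)\geq d(C_\alpha)\geq\Delta(C_\alpha)=d^\ast(C_\alpha)\geq d^\ast_\alpha(C_\alpha)=\omega(c_\alpha), \]
which collapses to $d(C_\alpha)=\Delta(C_\alpha)$. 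The dimension is then $\dim_{\F_q}(C_\alpha)=n-|D_\alpha(C_\alpha)|=|supp(\varphi_{\alpha,e_\alpha})|=|supp(f)|=|supp(g)|$, the last identity holding because $\overline{x^kg}$ is a cyclic shift of $g$ modulo $n$. The permutation equivalence of the $C_\alpha$'s is the standard fact that replacing $\alpha$ by $\alpha^a$ with $\gcd(a,n)=1$ acts on each $c_\alpha$ as a multiplier permutation of the coordinates. The step that requires the most care is the identification $d^\ast(\varphi_{\alpha,c_\alpha})=d^\ast_\alpha(C_\alpha)$, since $c_\alpha$ is a priori neither the idempotent nor the polynomial generator; the point is that any generator of a cyclic ideal has the same defining set, which rests on the semisimplicity of $\F_q(n)$ assumed throughout the paper.
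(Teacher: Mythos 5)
Your argument is correct and takes essentially the same route as the paper's own proof: both identify $d^\ast_\alpha(C_\alpha)$ with $d^\ast(f)$ through the common support of the Fourier transforms of the generators, obtain $d^\ast(f)=n-\deg(m_f)=\omega\left(\varphi^{-1}_{\alpha,f}\right)$ from Proposition~\ref{igualdad distancia aparente y n-grado} together with Lemma~\ref{lema de camion y chon}, and close the sandwich with (\ref{cotaBCH}). Your write-up simply makes explicit a couple of steps the paper leaves to the reader (the equality of the defining sets of $c_\alpha$, $e_\alpha$, $g_\alpha$, and the dimension count $\dim_{\F_q}(C_\alpha)=|supp(g)|$), so no further changes are needed.
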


\begin{proof}
Fix $\alpha\in U_n$. Set $f=\overline{x^kg}$ and let $e\in \F_q(n)$ be the idempotent generator of the ideal $C=\left(\varphi^{-1}_{\alpha,f}\right)$ in $\F_q(n)$ (see Remark~\ref{basicos de la DFT}(4)). It is easy to check that $supp(\varphi_{\alpha,e})=supp(f)=\Z_n\setminus D_\alpha(C)$ and hence, $d^\ast(\varphi_{\alpha,e})=d^\ast(f)$. On the one hand, by Proposition~\ref{igualdad distancia aparente y n-grado} and Lemma~\ref{lema de camion y chon} one has that $d^\ast f=n- deg(m_f)= \omega\left(\varphi^{-1}_{\beta,f}\right)\geq d(C)$. On the other hand, by (\ref{cotaBCH}), $d^\ast f=d^\ast\left(\varphi_{\alpha,e}\right)\leq d^\ast_\alpha(C)\leq d^\ast(C)\leq d(C)$. So we are done.
\end{proof}

Then, in order to construct codes with the desired property we need to find a divisor $g$ of $x^n-1$ satisfying the condition \textit{(2)} in Corollary \ref{resultado principal con divisores de x^n-1}. However, in the case $\K=\F_2$, it is clear that $g\in (\F_2^n,\star)$ is always an idempotent, and so, we only have to check that $supp(g)$ is union of $2$-cyclotomic cosets (see Lemma~\ref{TFDenFq}).

Let us show by an example how the combination of Corollary~\ref{resultado principal con el D(C)} and Corollay~\ref{construccion codigos delta=distancia aparente} works.

\begin{example}\label{diseñando ciclicos}\rm{
 Set $q=2$, $n=45$.  In this case $A(45)=\{1,7\}.$ Take $g=x^{40}+x^{39}+x^{38}+ x^{36}+x^{35}+x^{32}+x^{30}+x^{25}+x^{24}+x^{23}+x^{21}+ x^{20}+x^{17}+ x^{15}+x^{10}+x^9+ x^8+x^6+x^5+x^2+1$. One may check that $g\mid x^{45}-1$ in $\F_2[x]$ (so that $\K=\F_2$). To find the parameter $k$ mentioned in the corollary above, we may analize the vector $M(g)$ or we may fix $\beta \in U_{45}$ (as instance, such that $\mathrm{min}(\beta)=x^{12}+x^3+1$) and compute $g(1)$ and $g(\beta^3)$, because $D_\beta(g)=\Z_{45}\setminus \left( C_2(0)\cup C_2(3)\right)$. Let us choose the last alternative. Since $g(1)=1$ and $g(\beta^3)=\beta^{30}$ then $k=5$ will work because setting $f=x^{5}g$ we have that $f(1)=1$, $f\left(\beta^3\right)=(\beta^3)^5\beta^{30}=\beta^{45}=1$ and then $f\left(\beta^6\right)=f\left(\beta^{12}\right)=f\left(\beta^{24}\right)=1$, as $C_2(3)=\{3,6,12,24\}$. So that,  $\varphi^{-1}_{\alpha,f}\in \F(45)$, for all $\alpha\in U_{45}$. Now set $C=(\varphi^{-1}_{\beta,f})$. Then $D_\beta(C)=C_2(1)\cup C_2(3)\cup  C_2(9)\cup C_2(21)=\Z_{45}\setminus supp(M(f))$ and, by analizing $M(g)$ or $M(f)=F_{D_\beta(C)}$ as in Example~\ref{ejemplo calculo de parametros con la matriz F_D(C)}, we have that $5=d(C)=\Delta(C)$  and $\dim(C)=21$.

As $supp(x^5g)=\Z_{45}\setminus D_\beta(C)$, one may see that there are three subsets that determines $d^\ast(C)$; to wit, $\{1,2,3,4\}$, $\{16,17,18,19\}$ and $\{31, 32, 33, 34\}$. We choose $\{1,2,3,4\}\subset D_\beta(C)$ and construct the code $C'$ such that $D_\beta(C')=D_\beta(C)\setminus C_2(21)$. Note that $C$ is a subcode of $C'$, because $D_\beta(C')\subset D_\beta(C)$. Now one has that $C'$ satisfies the conditions in Corollary~\ref{resultado principal con divisores de x^n-1}, because $d^\ast (C)=5=d^\ast(f)$ and $\varphi^{-1}_{\alpha, f}\in C \subset C'$, so that $5=d(C')=\Delta(C')$  and $\dim(C')=25$, that is, $C'$ has better parameters than $C$.\medskip
}\end{example}

In the next section (see, as instance, Example~\ref{ejemplo extendido n=45}) we will refine this type of construction to obtain BCH codes $C$ such that $\Delta(C)=d(C)$.  Now we continue with the construction of codes $C$ satisfying that $\Delta(C)=d(C)$.

\begin{corollary}\label{NO bch cond suficiente true}
Consider an intermediate field $\F_q\subseteq \F_{q'}\subseteq \Le$, let $h$ be an irreducible factor of $x^n-1$ in $\F_{q'}[x]$ with defining set $D_\alpha(h)$ for some $\alpha\in U_n$. Set $g=(x^n-1)/h$.  If there are positive integers $j,t$ such that $g(\alpha^j)=\alpha^t$ and $\gcd \left(j,\frac{n}{\gcd(q-1,n)}\right)\mid t$ then there exists a $q$-ary code of length $n$ whose BCH bound equals its minimum distance.
\end{corollary}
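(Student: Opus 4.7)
The plan is to apply Corollary \ref{construccion codigos delta=distancia aparente} with $\K=\F_{q'}$, divisor $g=(x^n-1)/h$, and a carefully chosen exponent $k$. Since $h$ is an irreducible factor of $x^n-1$ in $\F_{q'}[x]$, its root set over $R_n$ is precisely a $q'$-cyclotomic coset, and so $D_\alpha(h)=C_{q'}(j_0)$ for any $j_0\in D_\alpha(h)$; moreover $g(\alpha^i)=0$ for $i\notin D_\alpha(h)$ and $g(\alpha^i)\neq 0$ for $i\in D_\alpha(h)$. In particular the fact that $g(\alpha^j)=\alpha^t$ forces $j\in D_\alpha(h)$, and the whole $q'$-cyclotomic coset $C_{q'}(j)$ coincides with $D_\alpha(h)$.

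Setting $f=\overline{x^k g}$, by Remark \ref{basicos de la DFT}(3) we have $\varphi^{-1}_{\alpha,f}\in \F_q(n)$ if and only if $f(\alpha^i)\in \F_q$ for every $i\in\{0,\dots,n-1\}$. For $i\notin D_\alpha(h)$ this is automatic since $f(\alpha^i)=0$. For $i\in D_\alpha(h)$, writing $i\equiv jq'^s\pmod n$ for some $s\geq 0$ and using that the coefficients of $g$ lie in $\F_{q'}$ (so that $g(y^{q'})=g(y)^{q'}$ for all $y\in \Le$), we get
\[
f(\alpha^{jq'^s})=\alpha^{kjq'^s}g(\alpha^j)^{q'^s}=\bigl(\alpha^{kj+t}\bigr)^{q'^s}.
\]
Since $\F_q\subseteq \F_{q'}$, every element of $\F_q$ is fixed by the $q'$-Frobenius, so it is enough to choose $k$ so that $\alpha^{kj+t}\in \F_q$.

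Now $\alpha^{kj+t}\in \F_q$ exactly when $(\alpha^{kj+t})^{q-1}=1$, i.e.\ $n\mid (kj+t)(q-1)$, which amounts to $m\mid kj+t$, where $m=\dfrac{n}{\gcd(q-1,n)}$. Hence the problem reduces to the linear congruence $kj\equiv -t\pmod{m}$, which is solvable in $k$ precisely when $\gcd(j,m)\mid t$; this is exactly the hypothesis. Picking any such $k\in\{0,\dots,n-1\}$ (reducing modulo $n$ if necessary), we obtain $f=\overline{x^k g}$ with $\varphi^{-1}_{\alpha,f}\in \F_q(n)$, and Corollary \ref{construccion codigos delta=distancia aparente} applied with $\beta=\alpha$ yields a $q$-ary cyclic code $C=(\varphi^{-1}_{\alpha,f})$ of length $n$ satisfying $\Delta(C)=d(C)$.

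The only delicate step is step where one checks that a single choice of $k$ works simultaneously at every $\alpha^i$ with $i\in D_\alpha(h)$: this is where the interplay between $g\in \F_{q'}[x]$, the description of $D_\alpha(h)$ as a single $q'$-cyclotomic coset, and the inclusion $\F_q\subseteq \F_{q'}$ becomes essential. Once this Frobenius-invariance argument is in place, the rest is the elementary solvability criterion for a linear congruence modulo $m$ and a direct appeal to the constructive corollary.
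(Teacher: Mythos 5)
Your proof is correct and follows essentially the same route as the paper's: both reduce the condition $\varphi^{-1}_{\alpha,\overline{x^kg}}\in\F_q(n)$ to the solvability of a linear congruence modulo $n/\gcd(q-1,n)$ (equivalent to $\gcd(j,n/\gcd(q-1,n))\mid t$), propagate the resulting membership $\alpha^{jk+t}\in\F_q$ over the whole $q'$-cyclotomic coset $D_\alpha(h)$ via the Frobenius, and then invoke Corollary~\ref{construccion codigos delta=distancia aparente}. Your write-up is in fact slightly more explicit than the paper's about why a single $k$ works at every root of $h$.
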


\begin{proof}
 By hypothesis, the congruence (in $X$),  $$\frac{q-1}{\gcd(q-1,n)}jX\equiv -\frac{q-1}{\gcd(q-1,n)}t \mod \frac{n}{\gcd(q-1,n)}$$ has a solution $X=k$, with $0\leq k\leq \frac{n}{\gcd(q-1,n)}$. Then $(q-1)(jk+t)\equiv 0 \mod n$, which means that $q(jk+t)\equiv jt+k \mod n$, and hence $\overline{x^kg}(\alpha^j)=\alpha^{jk+t}\in \F_q$. Clearly, for any $jq'^{a}\in D_{\alpha}(h)$ we have $jq'^{a}k+tq'^{a}\equiv q'^{a}( jk+t)\equiv jk+t \mod n$, so that $\overline{x^kg}(\alpha^{jq'^{a}})\in \F_q$. As $\overline{x^kg}(\alpha^i)=0$ for all $i\in \Z_n\setminus D_\alpha(h)$, we may apply Corollary~\ref{construccion codigos delta=distancia aparente} to get the desired result. More precisely, the code $C=\left(\varphi^{-1}_{\alpha,\overline{x^k g}}\right)\subseteq \F_q(n)$ satisfies the required conditions.
\end{proof}

\begin{corollary}\label{NO bch cond suficiente true con n=q^m-1}
  Let $n=2^m-1$, for some $m\in \N$. There exist at least $\frac{\phi\left(n\right)}{m}$ binary codes of length $n$ whose BCH bound equals its minimum distance.
\end{corollary}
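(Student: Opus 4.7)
The plan is to specialize Corollary~\ref{NO bch cond suficiente true} to the situation $q=2$, $\F_{q'}=\F_2$, and exhibit, for each $j$ coprime to $n$, an irreducible factor of $x^n-1$ that triggers the construction given there. Since $n=2^m-1$, the multiplicative order of $2$ modulo $n$ is exactly $m$, so $\F_{2^m}$ is the smallest field containing $U_n$ and any fixed $\alpha\in U_n$ is a primitive element of $\F_{2^m}^*$.

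First I would recall that the $\F_2$-irreducible factors of $x^n-1$ are in bijection with the $2$-cyclotomic cosets modulo $n$, each factor being $\mathrm{min}(\alpha^j)$ for a representative $j$. For $j$ with $\gcd(j,n)=1$ the coset $C_2(j)$ has cardinality equal to the order of $2$ modulo $n$, hence cardinality $m$; thus there are exactly $\phi(n)/m$ such cosets, and correspondingly $\phi(n)/m$ irreducible factors $h_j=\mathrm{min}(\alpha^j)$.

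Next, for each such $j$, set $g_j=(x^n-1)/h_j$ and verify the two hypotheses of Corollary~\ref{NO bch cond suficiente true}. Because $q-1=1$ we have $\gcd(q-1,n)=1$ and thus $n/\gcd(q-1,n)=n$; the divisibility condition $\gcd(j,n/\gcd(q-1,n))\mid t$ becomes $1\mid t$, which is automatic. Since $x^n-1$ has simple roots and $\alpha^j$ is a root of $h_j$ only, $g_j(\alpha^j)$ is a nonzero element of $\F_{2^m}$, and as $\alpha$ generates $\F_{2^m}^*$ we can write $g_j(\alpha^j)=\alpha^t$ for some positive $t$. The corollary then produces a binary cyclic code $C_j$ of length $n$ with $\Delta(C_j)=d(C_j)$.

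Finally, I would argue distinctness: a short direct computation using the identity $\overline{x^k g_j}(\alpha^i)=\alpha^{ki}g_j(\alpha^i)$ shows that the defining set of $C_j$ with respect to $\alpha$ is $\Z_n\setminus C_2(j)$, so distinct cyclotomic cosets yield distinct complements, and the codes $\{C_j\}$ are pairwise distinct, giving the required $\phi(n)/m$ codes. There is no substantial obstacle here; the only thing to check is that the hypotheses of Corollary~\ref{NO bch cond suficiente true} collapse to triviality in the binary, $n=2^m-1$ setting, so that \emph{every} irreducible factor of $x^n-1$ coming from a unit class modulo $n$ yields a valid construction.
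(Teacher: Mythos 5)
Your construction of the codes is essentially the paper's own argument: you specialize Corollary~\ref{NO bch cond suficiente true} to $q=q'=2$, use that $\alpha\in U_n$ is a primitive element of $\F_{2^m}$ to write $g_j(\alpha^j)=\alpha^t$, and observe that the divisibility hypothesis collapses to $1\mid t$ since $\gcd(j,n)=1$. The count of $\phi(n)/m$ unit cosets is also correct. Up to that point there is nothing to object to.

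The gap is in your distinctness step. You assert that the defining set of $C_j$ with respect to $\alpha$ is $\Z_n\setminus C_2(j)$, deduced from where $\overline{x^kg_j}$ vanishes. But $C_j$ is the ideal generated by $\varphi^{-1}_{\alpha,\overline{x^kg_j}}$, not by $\overline{x^kg_j}$ itself: writing $f=\overline{x^kg_j}$, one has $\varphi^{-1}_{\alpha,f}(\alpha^i)=\varphi_{\alpha,\varphi^{-1}_{\alpha,f}}[i]=f[i]$, the coefficient of $x^i$ in $f$, so $D_\alpha(C_j)=\Z_n\setminus supp(\overline{x^kg_j})$, where $supp$ denotes the \emph{coefficient} support. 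Your identity $\overline{x^kg_j}(\alpha^i)=\alpha^{ki}g_j(\alpha^i)$ computes the coordinates of $\varphi_{\alpha,f}$, which is the wrong object here; it would describe the defining set of the code generated by $f$ over $\Le$, not of the binary code $C_j$. A concrete check: for $n=15$ and $g_3=(x^{15}-1)/(x^4+x+1)$ as in Example~\ref{tabla n=15}, the resulting code has dimension $8=|supp(g_3)|$, whereas your claimed defining set $\Z_{15}\setminus C_2(1)$ would force dimension $|C_2(1)|=4$. The paper's distinctness argument works with the coefficient supports instead: distinct unit cosets give distinct irreducible factors $h$ and $h'$, hence distinct binary polynomials $g_j\neq g_{j'}$ of the same degree $n-m$, hence different supports and therefore different defining sets $\Z_n\setminus supp(\overline{x^kg_j})$. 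Your final paragraph needs to be replaced by an argument along these lines.
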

\begin{proof}
 We are going to apply the corollary above with $2=q=q'$.  Take $\Le=\F_{2^m}$. For each $0<j<n$, coprime with $n$, we consider the $2$-cyclotomic coset $C_2(j)$, which has exactly $m$ elements. Consider $\alpha \in U_n$.  Let $h|x^n-1$ be the polynomial in $\F_q[x]$, such that $D_\alpha(h)=C_2(j)$ and $g_j=(x^n-1)/h$. By hypothesis, $\alpha$ is a primitive element for $\Le$, so that $g_j(\alpha^j)=\alpha^k$ for some $k\in \Z_n$. The condition $\gcd\left(j,n\right)\mid k$ holds obviously. So that there exists a binary code of length $n$ whose BCH bound equals its minimum distance. Moreover, by Corollary~\ref{construccion codigos delta=distancia aparente} the family of codes $\{C_j=\left(\varphi^{-1}_{\alpha,\overline{x^k g_j}}\right)\mid \gcd(j,n)=1\}$ satisfies that $d(C_j)=\Delta(C_j)$ for any $j$. To compute the number of different codes in that family we consider the set $B=\{C_2(j) \tq j \in \Z_n, \gcd(j,n)=1\}$. One may check that $|B|=\frac{\phi\left(n\right)}{m}$.  Let $C_q(j)\neq C_q(j')\in B$. If $\alpha \in U_n$ and $h, h'$ are the divisors of $x^n-1$ with $D_\alpha(h)=C_q(j)$ and $D_\alpha(h')=C_q(j')$  then $g_j=(x^n-1)/h$ and $g_{j'}=(x^n-1)/h'$ have the same degree, and hence $supp(g_j)\neq supp(g_{j'})$ because they are binary polynomials. Since $D(C_j)=supp(\overline{x^k g_j})$ the result comes immediately.
\end{proof}

\begin{example}\label{tabla n=15}\rm{
 Set $q=2$, $n=15$. Then $A(15)=\{1,7\}$. By Corollary~\ref{NO bch cond suficiente true con n=q^m-1} there exist at least two codes such that its BCH bound equals its minimum distance (they will be determined by the polynomials $g_3$ and $g_4$ defined below). Denote the irreducible factors of $x^{15}-1$ in $\F_2[x]$, by $h_1=\Phi_2$, $h_2=\Phi_3$, $h_3=x^4+x+1$, $h_4=x^4+x^3+1$ and $h_5=\Phi_5$, where $\Phi_j$ denotes the $j$-th cyclotomic polynomial. Setting $g_i=\frac{x^n-1}{h_i}$, $i=1, \dots, 5$, we apply the corollaries above (with $\K=\F_2$) as follows.
 
Consider the factor $g_2$. Then one may check that in this case $\varphi^{-1}_{\alpha,{\overline{x g_2}}}=x^{10}+x^5 \in \F_2(15)$, for all $\alpha\in U_{15}$. The cyclic code $C$ generated by $x^{10}+x^5$ satisfies $\dim(C)=10$ and $\Delta(C)=2=d(C)$. Now let us fix $\alpha \in U_{15}$ such that $h_3=\mathrm{min}(\alpha)$ and $h_4=\mathrm{min}(\alpha^{13})$, where $\mathrm{min}(\alpha^{t})$ denotes the minimal polynomial of $\alpha^t$ in $\F_2[x]$. Then  $\varphi^{-1}_{\alpha,{\overline{x g_3}}}=x^{14}+x^{13}+x^{11}+x^7 \in \F_2(15)$ and  $\varphi^{-1}_{\alpha,{\overline{x^3 g_4}}}=x^8+x^4+x^2+x \in \F_2(15)$. This gives us the table 

\[\begin{array}{|l|l|l|} \hline 
 \text{Generator}&  \text{Dimension} & \Delta=d \\  \hline 
 \varphi^{-1}_{\alpha,{\overline{x g_2}}}   & 10 & 2 \\  \hline 
 \varphi^{-1}_{\alpha,{\overline{x g_3}}}    & 8 & 4 \\  \hline 
 \varphi^{-1}_{\alpha,{\overline{x^3 g_4}}} & 8 & 4 \\  \hline 
\end{array}\]

The polynomial $g_1$ gets an improper code. In the case of $g_5$, as $D_\alpha(g_5)=\Z_{15}\setminus C_2(3)$, it happens that, $g_5\left(\alpha^3\right)=\alpha^{14}$, so the conditions of Corollary~\ref{NO bch cond suficiente true} are not satisfied. 

After inspecting the divisors of $x^{15}-1$ in $\F_2[x]$ we find more interesting codes. For instance, one may check that the polynomial $h_2h_3h_5$ satisfies the conditions of Corollary~\ref{construccion codigos delta=distancia aparente}, with $k=0$, and hence it yields a code, say $C'$, such that $\Delta(C')=d(C')=5$ and $\dim (C')=7$. 
}\end{example}

\begin{example}\label{tabla n=21}\rm{
 Set $q=2$ and $n=21$. Denote the irreducible factors of $x^{21}-1$ in $\F_2[x]$ by $h_1=\Phi_2$, $h_2=\Phi_3$, $h_3=x^3+x+1$, $h_4=x^3+x^2+1$,  $h_5=x^6+x^4+x^2+x+1$ and $h_6=x^6+x^5+x^4+x^2+1$.

Set $g_i=\frac{x^n-1}{h_i}$, $i=1,\dots, 6$,  and fix $\alpha\in U_{21}$ such that $\mathrm{min}(\alpha)=h_6$.  We apply Corollary \ref{NO bch cond suficiente true} as above (with $\K=\F_2$) to get the following table of binary codes of length 21 whose BCH bound equals its minimum distance. We complete with another one satisfying the conditions of Corollary~\ref{construccion codigos delta=distancia aparente}.

\[\begin{array}{|l|l|l|} \hline 
 \text{Generator}&  \text{Dimension} & \Delta=d \\  \hline 
 \varphi_{\alpha,\overline{xg_2}}^{-1}&14& 2 \\ \hline
\varphi_{\alpha,g_3}^{-1}&12&3 \\ \hline
\varphi_{\alpha,\overline{x^3g_4}}^{-1}&12&3 \\ \hline
\varphi_{\alpha,\overline{xg_5}}^{-1}&8&6 \\ \hline
\varphi_{\alpha,\overline{x^5g_6}}^{-1}&8&6 \\ \hline
\varphi_{\alpha,\overline{h_1h_3h_5h_6}}^{-1}&10&5 \\ \hline
\end{array}\]

}\end{example}

\section{Applications: Constructing BCH codes whose minimum distance equals their apparent distance}

The following result allows us to construct BCH codes $B_q(\alpha,\delta,b)$ for which $d(B_q(\alpha,\delta,b))=\Delta(B_q(\alpha,\delta,b))=\delta$. We recall that the ideal generated by a polynomial $g\in \F_q(n)$ is denoted by $(g)$.

\begin{theorem}\label{base true distance}
Let $n$ be a positive integer, $p$ a prime number, $q$ a power of $p$ and $U_n$ the set of  primitive $n$-th roots of unity. Assume that $\gcd(n,q)=1$. Consider the fields $\F_q\subseteq \K\subseteq \Le$ such that $U_n\subset \Le$.  Let $g\in \K[x]$ be a divisor of $x^n-1$. If there exist $k\in \{0,\dots,n-1\}$ and $\beta\in U_n$ such that $\varphi^{-1}_{\beta,{\overline{x^kg}}}\in \F_q(n)$ then there exists a family of permutation equivalent BCH codes $\{C_\alpha=B_q(\alpha, \delta, b) \tq \alpha \in U_n\}$ with $\delta=n-\deg(g)$ and $b\in \Z_n$,  such that $\delta=\Delta(C_\alpha)=d(C_\alpha)$ and $\varphi^{-1}_{\alpha,{\overline{x^kg}}} \in C_\alpha$.
\end{theorem}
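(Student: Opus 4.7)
The plan is to promote the ideal $I_\alpha=\left(\varphi^{-1}_{\alpha,\overline{x^k g}}\right)$, which Corollary~\ref{construccion codigos delta=distancia aparente} already certifies satisfies $\Delta=d$, into an ambient BCH code of designed distance $\delta=n-\deg(g)$ that inherits the same distance equality. The first step is to isolate the parameters $\delta$ and $b$. Since $g\mid x^n-1$, Lemma~\ref{propiedades de d^*}(2) yields $d^\ast(g)=n-\deg(g)=\delta$, and because the list $\{\overline{x^{h+k} g}:0\le h\le n-1\}$ is just a reindexing of $\{\overline{x^h g}:0\le h\le n-1\}$, we also have $d^\ast(\overline{x^k g})=\delta$. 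By the coefficient-vector interpretation of the apparent distance given right after Definition~\ref{aparentepol}, there exists $b\in\{0,\dots,n-1\}$ such that the $\delta-1$ consecutive positions $\{b,b+1,\dots,b+\delta-2\}$, taken modulo $n$, lie entirely in $\Z_n\setminus supp(\overline{x^k g})$.

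Next, for each $\alpha\in U_n$, I would define $C_\alpha=B_q(\alpha,\delta,b)$. The hypothesis $\varphi^{-1}_{\beta,\overline{x^k g}}\in\F_q(n)$ extends by Remark~\ref{basicos de la DFT}(4) to $\varphi^{-1}_{\alpha,\overline{x^k g}}\in\F_q(n)$ for every $\alpha$, so $I_\alpha$ is a well-defined cyclic code over $\F_q$. Its defining set with respect to $\alpha$ is $\Z_n\setminus supp(\overline{x^k g})$, and this contains $\{b,\dots,b+\delta-2\}$, hence also its $q$-cyclotomic closure $D_\alpha(C_\alpha)$. Translating $D_\alpha(C_\alpha)\subseteq D_\alpha(I_\alpha)$ into inclusion of codes gives $I_\alpha\subseteq C_\alpha$, so in particular $\varphi^{-1}_{\alpha,\overline{x^k g}}\in C_\alpha$ as required.

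To pin down the two distance parameters simultaneously, I would use Lemma~\ref{lema de camion y chon} together with the $h$-independence of $m_{(\,\cdot\,)}$ established in (\ref{mcd}): since $g\mid x^n-1$ and $\gcd(x^k,x^n-1)=1$, one obtains $\gcd(\overline{x^k g},x^n-1)=g$ up to a unit, whence $\omega(\varphi^{-1}_{\alpha,\overline{x^k g}})=n-\deg(g)=\delta$. So $C_\alpha$ has a codeword of weight $\delta$, giving $d(C_\alpha)\le\delta$; conversely $\{b,\dots,b+\delta-2\}\subseteq D_\alpha(C_\alpha)$ yields $\delta\le\Delta(C_\alpha)\le d(C_\alpha)$ by the BCH bound, so all three quantities collapse to $\delta$. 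Permutation equivalence of the family $\{C_\alpha\}_{\alpha\in U_n}$ then follows from the standard root-of-unity substitution $\alpha\mapsto\alpha^a$ with $a\in A(n)$, as in Remark~\ref{diferentescotas}.

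The main obstacle I anticipate is the weight computation in the third step: everything hinges on identifying $m_{\overline{x^k g}}$ with $g$, which is what turns an abstract apparent-distance estimate into an explicit codeword of weight exactly $\delta$. Once that equality is settled, the rest is a straightforward translation between defining sets and cyclotomic closures, exactly as in the examples preceding the theorem.
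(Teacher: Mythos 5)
Your proof is correct, and it reaches the paper's conclusion by a more direct route. The paper builds the target code bottom-up: it forms the cyclotomic closure $T$ of the interval $\{\deg(g)+k+1,\dots,n+k-1\}$, defines the indicator polynomial $\varepsilon$ of $\Z_n\setminus T$, proves $d^\ast(\varepsilon)=d^\ast(g)$ by a support-containment argument, verifies that $C=\left(\varphi^{-1}_{\alpha,\varepsilon}\right)$ meets both conditions of Corollary~\ref{resultado principal con divisores de x^n-1} (using $f\star\varepsilon=f$ for membership), and only at the end checks that $C$ is in fact a BCH code with $b=\deg(g)+k+1$. You instead declare $C_\alpha=B_q(\alpha,\delta,b)$ from the outset (so the BCH property is free), obtain $\varphi^{-1}_{\alpha,\overline{x^kg}}\in C_\alpha$ from the reverse inclusion of defining sets, and replace the whole apparent-distance verification by the elementary sandwich $\delta\leq\Delta(C_\alpha)\leq d(C_\alpha)\leq\omega\left(\varphi^{-1}_{\alpha,\overline{x^kg}}\right)=n-\deg(m_{\overline{x^kg}})=\delta$; the two constructions produce the same code (your $D_\alpha(C_\alpha)$ is exactly the paper's $T$ when $b=\deg(g)+k+1$), but your assembly avoids the detour through $\varepsilon$ and Corollary~\ref{resultado principal con divisores de x^n-1}, at the cost of leaning on the classical BCH bound rather than the paper's $d^\ast(C)\leq d(C)$ machinery. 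One step you should make explicit: the inference ``$\Z_n\setminus supp(\overline{x^kg})$ contains $\{b,\dots,b+\delta-2\}$, hence its $q$-cyclotomic closure'' requires that $\Z_n\setminus supp(\overline{x^kg})$ is itself a union of $q$-cyclotomic cosets, which follows from the hypothesis $\varphi^{-1}_{\beta,\overline{x^kg}}\in\F_q(n)$ via Lemma~\ref{TFDenFq} and Remark~\ref{basicos de la DFT}; the paper states this explicitly and you use it silently, but it is not a gap.
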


\begin{proof}
Set $g=\sum_{i=0}^{n-1}a_ix^i$ and suppose that there there exist $k\in \{0,\dots,n-1\}$ and $\beta\in U_n$ such that $\varphi^{-1}_{\beta,{\overline{x^kg}}}\in \F_q(n)$. Let $f=\overline{x^k g}$ and consider $\alpha \in U_n$. By Lemma \ref{propiedades de d^*}, $d^\ast(g)=n-\deg(g)$. Clearly $m_f=g$ and $d^\ast(f)=d^\ast(g)$. We collect $T=\bigcup_{j=\deg(g)+k+1}^{n+k-1}C_q(\overline{j})$, where $\overline{j}$ is the canonical representative of $j$ module $n$, and $\varepsilon =\sum_{i=0}^{n-1}r_ix^i$ such that $r_i=0$ if $i\in T$ and $1$, otherwise.

We claim that $d^\ast(g)= d^\ast(\varepsilon)$. From the definition of $\epsilon$ one has that $d^\ast(g) \leq d^\ast(\varepsilon)$. We are going to see the reverse inequality. By Remark~\ref{basicos de la DFT}(3), as $\varphi^{-1}_{\beta,{f}}\in \F_q(n)$ we have that $supp\left(f\right)$ is union of $q$-cyclotomic cosets modulo $n$. So, for any $j\in \{\deg(g)+k+1,\dots, n+k-1\}$ we have that $C_q(\overline{j})\cap supp\left(f\right)=\emptyset$ and hence $T\subseteq \Z_n\setminus  supp\left(f\right)$, which means that $ supp\left(f\right)\subseteq supp\left(\varepsilon\right)$ and hence $d^\ast(g)=d^\ast(f)\geq d^\ast(\varepsilon)$.

By construction, $supp(\varepsilon)$ is union of $q$-cyclotomic cosets, so $\varphi^{-1}_{\alpha,\varepsilon}\in \F_q(n)$ (see Lemma~\ref{TFDenFq}). We set $C=\left(\varphi^{-1}_{\alpha,\varepsilon}\right)$ in $\F_q(n)$. We are going to see that $C$ satisfies the conditions \textit{(1)} and \textit{(2)} of Corollary~\ref{resultado principal con divisores de x^n-1}.

 \textit{(1)} We have already seen that $d^\ast(f)=d^\ast(g)=d^\ast(\varepsilon)$. Now, by Proposition~\ref{igualdad distancia aparente y n-grado} and Lemma~\ref{lema de camion y chon} one has that $d^\ast f=n- deg(m_f)= \omega\left(\varphi^{-1}_{\alpha,f}\right)\geq d(C)$. On the other hand, by (\ref{cotaBCH}), $d^\ast f=d^\ast(\epsilon)=d^\ast\left(\varphi_{\alpha,\varphi^{-1}_{\alpha,\epsilon}}\right)\leq d^\ast_\alpha(C)\leq d^\ast(C)\leq d(C)$.Therefore $d^\ast(C)=d^\ast(f)$. 
 
 \textit{(2)} Since $supp(f)\subseteq supp(\varepsilon)$, we have that $f\star \varepsilon=f$, and then $\varphi^{-1}_{\alpha,f}\cdot \varphi^{-1}_{\alpha,\varepsilon}=\varphi^{-1}_{\alpha,f}$, which means that  $\varphi^{-1}_{\alpha,f}\in (\varphi^{-1}_{\alpha,\varepsilon})=C$ (see also Remark~\ref{basicos de la DFT}(4)). So that, conditions of Corollary~\ref{resultado principal con divisores de x^n-1} are satisfied, and hence $d(C)=\Delta(C)$.

Finally, to see that $C$ is a BCH code with designed distance $\delta=\Delta(C)$, we note that, any $q$-cyclotomic coset $Q\subseteq supp(\varepsilon)=D_\alpha(C)$ verifies that  $Q\cap \{deg(g)+k+1,\dots,n+k-1\}\neq\emptyset$. So, as we mentioned in Section~\ref{preliminares}, this means that $C$ is a BCH code with $b=deg(g)+k+1$ and designed distance $\delta=\Delta(C)=n-deg(g)$.
\end{proof}

 The theorem above gives us a method to transform a given cyclic code $C=(g)$, with $d(C)=\Delta(C)$ into another code with higher dimension; in fact, we can get a new BCH code. The key idea is to consider as generator $\varepsilon$ instead of $g$ via the definition of $T$. This definition may be done in different ways that can drives us to different BCH codes. All these ideas are shown in the next example.

\begin{example}\label{ejemplo extendido n=45}\rm{
 We continue with the code $C$ showed in Example~\ref{diseñando ciclicos}. Recall that  $q=2$, $n=45$ and $C$ is the cyclic code with $D_\beta(C)=C_2(1)\cup C_2(3)\cup  C_2(9)\cup C_2(21)$, where $\beta \in U_{45}$ is such that $\mathrm{min}(\beta)=x^{12}+x^3+1$. 
 Following the proof of the previous theorem we have that $T=C_2(1)\cup C_2(3)$ and set $\varepsilon=\sum_{i\not\in T}x^i$. Then $C''=\left(\varphi^{-1}_{\beta,\varepsilon}\right)$ has $D_\beta(C'')=C_2(1)\cup C_2(3)$; so that it is the BCH code  $B_2(\beta,5,1)$ of dimension  $29$ such that $d(C'')=\Delta(C'')=5$. This code has even better parameters than $C'$ (see Example~\ref{diseñando ciclicos}).

It is also possible to obtain, from the code $C'$, the BCH code $B_2(\beta, 5,16)$ with $d(B_2(\beta, 5,16))=\Delta(B_2(\beta, 5,16))=5$ and dimension $29$, by taking $T'=C_2(1)\cup C_2(9)$.
}\end{example}

The following theorem is a classical result on the theory of BCH codes.
\begin{theorem}[\cite{S}]\label{teorema clasico}
 Let $h,m\in\N$. A BCH code $C$ of length $n=q^m-1$ and designed distance $\delta=q^h-1$ over $\F_q$ satisfies that $d(C)=\Delta(C)$. 
\end{theorem}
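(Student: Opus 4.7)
The plan is to show $d(C) = \Delta(C)$ by exhibiting a codeword $c \in C$ with $\omega(c) \leq \delta$. Indeed, the BCH bound gives $\Delta(C) \geq \delta$, and (\ref{cotaBCH}) gives $d(C) \geq \Delta(C)$; so the existence of such a $c$ forces the chain
$$\delta \leq \Delta(C) \leq d(C) \leq \omega(c) \leq \delta$$
to collapse to equalities, yielding the theorem.

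To construct the weight-$\delta$ codeword I would work in the splitting field $\Le = \F_{q^m}$, in which $\alpha$ is a primitive element of $\Le^{\ast}$. Choose any $\F_q$-vector subspace $W \subseteq \Le$ of dimension $h$, and write its $\delta = q^h - 1$ nonzero elements as $\{\alpha^{i_1}, \dots, \alpha^{i_\delta}\}$. The polynomial
$$c(x) = x^{i_1} + x^{i_2} + \cdots + x^{i_\delta} \in \F_q[x]$$
has weight exactly $\delta$, and its evaluations at powers of $\alpha$ are the power sums $c(\alpha^l) = \sum_{w \in W \setminus \{0\}} w^l$. A classical computation shows that these power sums vanish for $1 \leq l \leq \delta - 1$: since $\prod_{w \in W}(x - w)$ is the $\F_q$-linearized polynomial $x^{q^h} + c_{h-1} x^{q^{h-1}} + \cdots + c_0 x$, all of its intermediate coefficients vanish (equivalently, every elementary symmetric function of $W$ of degree not of the form $q^h - q^i$ is zero), and Newton's identities, together with the $p$-divisibility of the surviving binomial coefficients, force $p_l(W) = 0$ in the required range. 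Hence $c \in B_q(\alpha, \delta, 1)$, which settles the narrow-sense case.

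For a BCH code $B_q(\alpha, \delta, b)$ with an arbitrary starting index $b \in \Z_n$, I would feed the same input into Theorem \ref{base true distance}: apply it to the divisor $g(x) = (x^n - 1)/\prod_{r=1}^{\delta}(x - \alpha^{i_r}) \in \Le[x]$ of degree $n - \delta$, tuning the shift $k \in \{0, \dots, n-1\}$ so that the starting index $\deg(g) + k + 1 \bmod n$ produced by the theorem matches the prescribed $b$. The main obstacle is verifying the hypothesis $\varphi^{-1}_{\alpha, \overline{x^k g}} \in \F_q(n)$: by Remark \ref{basicos de la DFT}(3) this reduces to checking $\overline{x^k g}(\alpha^j) \in \F_q$ for every $j \in \Z_n$. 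The condition is automatic at the $n - \delta$ zeros of $g$, and at the remaining $\delta$ indices it boils down to an explicit $\F_q$-rationality assertion that is clean when $h \mid m$ (one takes $W = \F_{q^h}$, a subfield, so that the relevant coefficients automatically lie in $\F_q$) but requires a more delicate, Frobenius-stable choice of $W$ in the general case---this is, in my view, the subtle step of the argument.
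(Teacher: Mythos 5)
The paper does not actually prove this statement: it is quoted from \cite{S} as a classical fact, so there is no internal proof to compare against. Your first half is precisely the classical argument behind the cited theorem, and it is correct: the codeword supported on the $\delta=q^h-1$ nonzero elements of an $h$-dimensional $\F_q$-subspace $W\subseteq\F_{q^m}$ has $c(\alpha^l)=p_l(W)$, the elementary symmetric functions of $W\setminus\{0\}$ vanish outside degrees $q^h-q^i$ because $\prod_{w\in W}(x-w)$ is linearized, and in Newton's identities the only surviving terms $l\,e_l$ with $1\le l\le \delta-1$ occur at $l=q^h-q^i$ with $i\ge 1$, where $l\equiv 0\pmod p$; hence $p_l(W)=0$ for $1\le l\le\delta-1$, the vanishing propagates over cyclotomic cosets since $c$ has coefficients in $\F_p$, and the chain $\delta\le\Delta(C)\le d(C)\le\omega(c)=\delta$ collapses. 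This settles the narrow-sense case, which is what the theorem of \cite{S} (and, in context, the statement here) asserts.

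The second half contains a genuine flaw, though it concerns a generality the cited theorem does not claim. You propose to reach an arbitrary starting index $b$ by ``tuning the shift $k$'' in Theorem~\ref{base true distance} so that $\deg(g)+k+1$ matches $b$. But $k$ is not a free parameter there: the hypothesis $\varphi^{-1}_{\alpha,\overline{x^kg}}\in\F_q(n)$ requires $\alpha^{jk}g(\alpha^j)$ to be fixed by the $q$-th power map at every $j\in supp(\varphi_{\alpha,g})$ simultaneously, which pins $k$ down when such a $k$ exists at all; replacing $k$ by $k'$ multiplies the $j$-th coordinate by $\alpha^{j(k'-k)}$ and generically destroys $\F_q$-rationality. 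So you cannot prescribe $b$ this way, and your own caveat about needing a ``Frobenius-stable choice of $W$'' correctly signals that the arbitrary-$b$ case is not established by your argument. If you read the statement as the narrow-sense result of \cite{S}, your first paragraph already suffices and the second should be dropped; if you insist on arbitrary $b$, you would need a genuinely different argument (and the claim itself would deserve scrutiny, since for general $b$ the defining set of $B_q(\alpha,\delta,b)$ can be much larger than the designed string).
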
 

Now let us show some examples of construction of new BCH codes.

\begin{example}\rm{
 Set $q=2$ and $n=15$. Consider the polynomial $g=g_3$ in Example~\ref{tabla n=15}; that is $g=x^{11}+x^{8}+x^{7}+x^{5}+x^{3}+x^{2}+x+1$. Then, its coefficient vector is
\[ M(g)=(1,1,1,1,0,1,0,1,1,0,0,1,0,0,0)\]
and we may check that $d^\ast(g)=4$. We know that $\varphi^{-1}_{\alpha,g} \not\in \F_2(n)$ for all $\alpha \in U_{15}$, because $C_2(7)$ is not contained in $supp(g)$ (see Lemma~\ref{TFDenFq}). However, the polynomial $\overline{xg}$ with coefficient vector
\[ M(\overline{xg})=(0,1,1,1,1,0,1,0,1,1,0,0,1,0,0)\]
satisfies that $\varphi^{-1}_{\alpha,_{\overline{xg}}} \in \F_2(n)$ for all $\alpha \in U_{15}$. Let us fix $\alpha\in U_{15}$. Then $C=(\varphi^{-1}_{\alpha,_{\overline{xg}}})$ is a binary code with $d(C)=d^\ast(C)=4$ and $\dim_{\F_2}(C)=8$ (see Corollary~\ref{construccion codigos delta=distancia aparente}). But, $C$ is not a BCH code. Following the ideas in Theorem~\ref{base true distance} we may replace $0$'s by $1$'s in the suitable places to get the vector
\[ M(\varepsilon)=(0,1,1,1,1,1,1,0,1,1,1,0,1,0,0)\]
such that $C'=\left(\varphi^{-1}_{\alpha,\varepsilon}\right)$ is a BCH code in $\F_2(n)$, with $d(C')=d^\ast(C')=\delta=4$ and $\dim_{\F_2}(C')=10$. Clearly, this code cannot be considered in Theorem~\ref{teorema clasico}.
}\end{example}

We finish by extending Corollary~\ref{NO bch cond suficiente true} to BCH codes.

\begin{corollary}\label{bch cond suficiente true}
Consider an intermediate field $\F_q\subseteq \F_{q'}\subseteq \Le$, let $h$ be an irreducible factor of $x^n-1$ in $\F_{q'}[x]$ with defining set $D_\alpha(h)$ for some $\alpha\in U_n$ and $g=(x^n-1)/h$.  If there are positive integers $j,t$ such that $g(\alpha^j)=\alpha^t$ and $\gcd \left(j,\frac{n}{\gcd(q-1,n)}\right)\mid t$ then there exists a BCH code of designed distance $\delta$, $C=B_q(\alpha,\delta,b)$, such that $\delta=\Delta(C)=d(C)=\deg(h)$, for certain $b\in\Z_n$.
\end{corollary}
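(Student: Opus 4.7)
The plan is to combine Corollary~\ref{NO bch cond suficiente true} with Theorem~\ref{base true distance}: the former supplies an integer $k$ making $\varphi^{-1}_{\alpha,\overline{x^kg}}$ belong to $\F_q(n)$, and the latter then upgrades the resulting cyclic code to a BCH code with the desired parameters.

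First, I would replay the opening of the proof of Corollary~\ref{NO bch cond suficiente true}. Under the hypothesis $\gcd\!\left(j,\frac{n}{\gcd(q-1,n)}\right)\mid t$ the congruence
\[
 \frac{q-1}{\gcd(q-1,n)}\,jX \equiv -\frac{q-1}{\gcd(q-1,n)}\,t \pmod{\frac{n}{\gcd(q-1,n)}}
\]
admits a solution $X=k$ with $0\le k<n$. For this $k$ one gets $(q-1)(jk+t)\equiv 0 \pmod n$, which yields $\overline{x^kg}(\alpha^j)=\alpha^{jk+t}\in\F_q$, and then by closure of $D_\alpha(h)=C_{q'}(j)$ under multiplication by $q'$, $\overline{x^kg}(\alpha^{jq'^a})\in\F_q$ for every $a$. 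Since $\overline{x^kg}(\alpha^i)=0$ for every $i\in\Z_n\setminus D_\alpha(h)$, one concludes $\overline{x^kg}(\alpha^i)\in\F_q$ for all $i$, i.e.\ by Remark~\ref{basicos de la DFT}(3), that $\varphi^{-1}_{\alpha,\overline{x^kg}}\in\F_q(n)$.

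Next, with $\K=\F_{q'}$, $\beta=\alpha$, and the same $g$ and $k$, all hypotheses of Theorem~\ref{base true distance} are satisfied. That theorem directly produces a BCH code $C=B_q(\alpha,\delta,b)$ with $\delta=n-\deg(g)$ for some $b\in\Z_n$, together with the chain of equalities $\delta=\Delta(C)=d(C)$ and the membership $\varphi^{-1}_{\alpha,\overline{x^kg}}\in C$. Since $g=(x^n-1)/h$, we have $\deg(g)=n-\deg(h)$, and therefore $\delta=\deg(h)$, as required.

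The main obstacle is essentially bookkeeping: verifying that the integer $k$ produced by the number-theoretic congruence in Corollary~\ref{NO bch cond suficiente true} is exactly the object Theorem~\ref{base true distance} asks for as input. Because both results pivot on the single condition $\varphi^{-1}_{\alpha,\overline{x^kg}}\in\F_q(n)$, the transition between them is immediate and requires no further computation; the substantive content of the statement is already encoded in the previous two results.
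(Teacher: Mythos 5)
Your proposal is correct and follows exactly the paper's route: the paper's own proof of this corollary is a one-line citation of Corollary~\ref{NO bch cond suficiente true} together with Theorem~\ref{base true distance}, which is precisely the combination you carry out (with the same choice $\K=\F_{q'}$, $\beta=\alpha$, and the $k$ produced by the congruence). Your version merely makes explicit the bookkeeping that the paper leaves implicit, including the final observation $\delta=n-\deg(g)=\deg(h)$.
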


\begin{proof}
Comes immediately from Corollary~\ref{NO bch cond suficiente true} together with  Theorem~\ref{base true distance}.
\end{proof}

\begin{example}\rm{ We continue with the codes  determined by the polynomials $g_2$, $g_3$ and $g_4$ in Example~\ref{tabla n=15}. Recall that in this case $\alpha \in U_{15}$ satisfies that $\mathrm{min}(\alpha)=h_3$. By applying the ideas contained in the proof of Theorem~\ref{base true distance}, one may obtain the following BCH codes whose minimum distance equals the maximum of their BCH bounds.

It is possible to modify the defining set, w.r.t. $\alpha$, of a cyclic code in order to obtain a defining set for a new code with higher dimension. In this case we will say that the original one was dimensional-extended to the new one. For example, $\left(\varphi^{-1}_{\alpha,{\overline{xg_2}}}\right)$ in $\F_q(15)$ has dimension 10 and it can be dimensional-extended to the codes $B_2(\alpha,2,0)$ of dimension $14$ and $B_2(\alpha,2,3t)$ of dimension $11$, for $t=1,2,3$. The cyclic code determined by $g_3$, that is $\left(\varphi^{-1}_{\alpha,{\overline{xg_3}}}\right)$ in $\F_q(15)$, has dimension 8 and it may be dimensional-extended to $B_2(\alpha,4,13)$ of dimension $10$. Finally, from $\left(\varphi^{-1}_{\alpha,\overline{x^3g_4}}\right)$, with dimension 8, we get $B_2(\alpha,4,0)$ of dimension $10$.

Note that the dimensional-extended BCH codes associated to $g_2,g_3$ and $g_4$ are not considered in the classical result \ref{teorema clasico}.
 There is another interesting code which has not been considered: the code $\left(\varphi_{\alpha,\overline{h_1 h_2h_3h_5}}^{-1}\right)$, where $h_1, h_2, h_3, h_5$ were defined in Example~\ref{tabla n=15}, is the code $B_2(\alpha,5,11)$ of dimension $10$.
}\end{example}

\begin{example}\rm{
We also also show how to \textit{extend the dimension} of the codes in Example~\ref{tabla n=21}. We recall that $q=2$, $n=21$ and $\alpha$ satisfies that $\mathrm{min}(\alpha)=h_6$. In this case, we have the following BCH codes whose minimum distance and apparent distance coincide.

It is possible to modify the set $D_\alpha\left(\varphi_{\alpha,\overline{xg_2}}^{-1}\right)$ in three different ways. The biggest dimensional-extended code that we can obtain is $B_2(\alpha,2,0)$ of dimension $20$. In the case of $\left(\varphi_{\alpha,g_3}^{-1}\right)$, it determines two BCH codes. The first one is $B_2(\alpha,3,19)$ of dimension $15$ and the second one is $B_2(\alpha,3,12)$ of dimension $12$. The code $\left(\varphi_{\alpha,\overline{x^3g_4}}^{-1}\right)$ may be dimensional-extended to $B_2(\alpha,3,15)$ of dimension $12$, and $B_2(\alpha,3,1)$ of dimension $15$. The code $\left(\varphi_{\alpha,\overline{xg_5}}^{-1}\right)$ may be dimensional-extended to $B_2(\alpha,6,17)$ of dimension $11$. In the case of $\left(\varphi_{\alpha,\overline{x^5g_6}}^{-1}\right)$, we obtain $B_2(\alpha,6,0)$ of dimension $11$. Finally, $\left(\varphi_{\alpha,\overline{h_1h_3h_5h_6}}^{-1}\right)$ is the BCH code $B_2(\alpha,10,17)$ of dimension $10$.
}\end{example}

We finish with an example of a binary BCH code of length $33$ whose minimum distance equals the maximum of their BCH bounds. We have not found in the literature any binary BCH code satisfying that condition and having this length and dimension.

\begin{example}\rm{
 Set $q=2$, $n=33$ and $\alpha\in U_{33}$ such that $\mathrm{min}(\alpha)=x^{10}+x^7+x^5+x^3+1$ and $g=\mathrm{min}(\alpha)\mathrm{min}(\alpha^3)\mathrm{min}(\alpha^5)$. One may check that $g$ verifies the conditions of Theorem \ref{base true distance} with $k=0$ and $T=C_2(1)$; in fact $\varphi^{-1}_{\alpha,g}=x^{22}+x^{11}+1$. Hence, it determines $B_2(\alpha,3,31)$ of dimension $23$.
}\end{example}

\textbf{Acknowledgement}. The authors are grateful to the referees for many suggestions which contributed to improve this paper.

\medskip
Received xxxx 20xx; revised xxxx 20xx.
\medskip

\end{document}